\newacronym{dnf}{DNF}{disjunctive normal form}
\newtheorem{theorem}{Theorem}
\newtheorem{lemma}{Lemma}
\newcommand{\qed}{\hfill $\square$}
\begin{document}

\title{A Fast and Accurate Failure Frequency Approximation for $k$-Terminal Reliability Systems}


\author{Anoosheh Heidarzadeh,~\IEEEmembership{Member,~IEEE,}
Alex Sprintson,~\IEEEmembership{Senior Member,~IEEE,} and Chanan Singh~\IEEEmembership{Fellow,~IEEE}
\thanks{The authors are with the Department of Electrical and Computer Engineering, Texas A\&M University, College Station, TX 77843 (E-mail: \{anoosheh,spalex,singh\}@tamu.edu).}}

\maketitle
\thispagestyle{plain}
\begin{abstract}
This paper considers the problem of approximating the failure frequency of large-scale composite $\boldsymbol{k}$-terminal reliability systems. In such systems, the nodes ($\boldsymbol{k}$ of which are terminals) are connected through components which are subject to random failure and repair processes. At any time, a system failure occurs if the surviving system fails to connect all the $\boldsymbol{k}$ terminals together. We assume that each component's up-times and down-times follow statistically independent stationary random processes, and these processes are statistically independent across the components. In this setting, the exact computation of failure frequency is known to be computationally intractable (NP-hard). In this work, we present an algorithm to approximate the failure frequency for any given multiplicative error factor that runs in polynomial time in the number of (minimal) cutsets. Moreover, for the special case of all-terminal reliability systems, i.e., where all nodes are terminals, we propose an algorithm for approximating the failure frequency within an arbitrary multiplicative error that runs in polynomial time in the number of nodes (which can be much smaller than the number of cutsets). In addition, our simulation results confirm that the proposed method is much faster and more accurate than the Monte Carlo simulation technique for approximating the failure frequency. 
\end{abstract}
 



\begin{IEEEkeywords}
$\boldsymbol{k}$-terminal reliability systems, failure probability and failure frequency, polynomial-time approximation algorithms, minimum cutsets and near-minimum cutsets. 
\end{IEEEkeywords}

\section*{Acronyms}
\vspace{0.125cm}
\hspace{-0.675cm}
\begin{tabular}{p{1.5cm}p{6.95cm}}
DNF & disjunctive normal form\\
KLM & Karp-Luby-Madras\\
MCS & Monte Carlo simulation\\
NP-hard & non-deterministic polynomial-time hard\\
RGC & recursive generalized contraction
\end{tabular}

\section*{Notation}
\vspace{0.125cm}
\hspace{-0.675cm}
\begin{tabular}{p{1.5cm}p{6.95cm}}
$\Psi$ & a composite system\\
$m$ & number of components\\
$n$ & number of nodes\\
$k$ & number of terminal nodes\\
$[i]$ & set of integers $\{1,\dots,i\}$\\
$\lambda_i$ & failure rate of component $i$\\
$\mu_i$ & repair rate of component $i$\\
$\lambda_{\text{max}}$ & maximum failure rate of a component\\
$\mu_{\text{min}}$ & minimum repair rate of a component\\
$\lambda$ & sum of failure rates of all components\\
$\mu$ & sum of repair rates of all components\\
$p_i$ & probability of component $i$ being unavailable\\
$w_i$ & weight of component $i$\\
$w_{\text{max}}$ & maximum weight of a component\\
$w$ & sum of weights of all components\\
\end{tabular}


\vspace{0.125cm}
\hspace{-0.675cm}
\begin{tabular}{p{1.5cm}p{6.95cm}}

$C$ & set of all minimal cutsets\\
$N$ & number of minimal cutsets\\
$\mathcal{C}_j$ & $j$th minimal cutset\\
$s^{*}$ & minimum size of a cutset\\
$w(\mathcal{C})$ & weight of a cutset $\mathcal{C}$\\
$w^{*}$ & minimum weight of a cutset\\
$p^{*}$ & maximum failure probability of a cutset\\
$p(\mathcal{C}_{\mathcal{I}})$ & probability of all components in the collection of cutsets $\{\mathcal{C}_j\}_{j\in \mathcal{I}}$ being unavailable\\
$\epsilon$ & target approximation error factor\\
$\delta$ & target approximation error probability\\
$P_f$ & probability of system failure in steady-state\\
$P_f^{+}, P_f^{-}$ & first-order upper- and lower-bound on $P_f$\\
$\hat{P}_f$ & an approximation of $P_f$ using bounding technique\\
$\tilde{P}_f$ & an approximation of $P_f$\\
$F_f$ & frequency of system failure in steady-state\\
$F_f^{+}, F_f^{-}$ & first-order upper- and lower-bound on $F_f$\\
$\hat{F}_f$ & an approximation of $F_f$ using bounding technique\\
$\tilde{F}_f$ & an approximation of $F_f$\\
$P$ & probability of all components of some cutset being unavailable and unexposed\\
$\tilde{P}$ & an approximation of $P$\\
$C^{(\alpha)}$ & set of all $\alpha$-min cutsets for arbitrary $\alpha\geq 1$\\
$N^{(\alpha)}$ & number of $\alpha$-min cutsets\\
$P^{(\alpha)}$ & probability of all components of some $\alpha$-min cutset being unavailable and unexposed \\
$P_f^{(\alpha)}$ & probability of all components of some $\alpha$-min cutset being unavailable\\
$\tilde{P}_f^{\text{MC}}$ & an approximation of $P_f$ using Monte Carlo simulation\\
$\tilde{F}_f^{\text{MC}}$ & an approximation of $F_f$ using Monte Carlo simulation\\
$\boldsymbol{s}$ & (Boolean) system-state vector\\
$\mathcal{S}_f$ & set of all system-states in which the system is unavailable\\
$\mathcal{I}(\boldsymbol{s})$ & set of unavailable components in system-state $\boldsymbol{s}$\\
$p(\boldsymbol{s})$ & probability of system-state $\boldsymbol{s}$\\
poly$(N)$ & a polynomial function in $N$\\
poly$(n)$ & a polynomial function in $n$\\
\end{tabular}

\section*{Nomenclature}
\vspace{0.125cm}
\hspace{-0.675cm}
\begin{tabular}{p{1.5cm}p{6.95cm}}
$\alpha$-min cutset & a minimal cutset of weight no greater than $\alpha$ times the minimum cutset weight\\
$(\epsilon,\delta)$-approx. & a multiplicative approximation with error factor at most $\epsilon$ and error probability at most $\delta$\\	
\end{tabular}


\section{Introduction}
Consider a composite system whose $n$ nodes, consisting of $k$ ($2\leq k\leq n$) \emph{terminals} and $n-k$ \emph{relays}, are connected through components that are subject to statistically-independent continuous-time stationary failure/repair random processes. At any given time, each component is either operational or not, and the system fails if the surviving system of operational components does not connect all terminals. The \emph{probability of failure} ($P_f$) and the \emph{frequency of failure} ($F_f$) are two important measures of reliability of such systems~\cite{GMP:64,SB:77,HTL:81}. These quantities are very useful to derive other reliability measures such as mean down-time and mean cycle-time~\cite{SB:77}.


Numerous algorithms were previously designed for computing $P_f$~\cite{BB:68,SB:77,A:79,S:81,F:86,L:87,BS:87,DS:88,H:89,W:90,LW:92,LJY:95,LP:01,BT:03} and $F_f$~\cite{SB:74,SB:77,S:79,S:80,S:81,S2:81,DS:88,B:82}. The proposed algorithms, however, become intractable in large-scale systems because the computational complexity grows very quickly with the number of nodes in the system (i.e., $n$). Specifically, the exact computation of these quantities was shown to be NP-hard~\cite{PB:83,B:86}. To overcome this challenge, various methods were developed to approximate $P_f$ and $F_f$. 


The existing techniques for approximating $P_f$ are based on the Monte Carlo simulation (MCS)~\cite{MPL:94,LCFA:02,SKS:02,HBKK:05,ADS:09} and rare-events simulation~\cite{JB:69,NBB:70,EP:70,WK:11}. Similar methods were used to approximate $F_f$ in~\cite{S:77} and~\cite{MS:99}. Notwithstanding, the only existing technique which can provably approximate $P_f$ within an arbitrary multiplicative error and runs in polynomial time in the number of cutsets (i.e., minimal collections of components whose failure results in a loss of connectivity of some terminals from the rest) was proposed in~\cite{K:01}. To our knowledge, no such computationally efficient algorithm with provable guarantees was previously proposed for approximating $F_f$ with an arbitrary multiplicative error factor. 



The algorithm of~\cite{K:01} estimates $P_f$ for the settings where the failure/repair random process for each component is stationary. In such settings, each component is available or unavailable at any time instant, independently from other components, with some constant probability (independent of time). Then, $P_f$ is equal to the probability that all components in a cutset are unavailable. The algorithm of~\cite{K:01} relies on the fact that $P_f$ can be written as the truth probability of some disjunctive normal form (DNF) formula, and a multiplicative approximation of this probability can be computed in polynomial time in the number of cutsets in the system. However, the number of cutsets can be exponential in $n$, and the application of this technique is not practical in such cases. 




Interestingly, for the special case where all $n$ nodes are terminals (i.e., $k=n$), it was shown~\cite{K:01} that a multiplicative approximation of $P_f$ can be computed in polynomial time in $n$. (Such results do not exist for more general cases where only a subset of nodes are terminals.) One example of such all-terminal reliability systems is the electricity distribution networks where the terminals represent either a supply point or a major load point. The feeders between these terminals are the components that can fail and be repaired. The loss of supply to any major load point is considered a system failure. At the distribution level there may be only one feed point for many of such networks, and in such cases, the networks are all-terminal reliability systems. (For a network with more than one feed point, depending on its topology, the network may or may not be an all-terminal reliability system.) Other examples of such systems can be found in communication networks, computer networks, and transportation networks~\cite{C:87}.

The main ideas in \cite{K:01} can be summarized as follows: (i) the number of \emph{weak cutsets} in a system, i.e., those cutsets with higher probability of failure, is polynomial in $n$, and the enumeration of all such cutsets can be done in polynomial time in $n$, and (ii) the probability that all components of a weak cutset are unavailable provides a multiplicative approximation of $P_f$. The idea of using weak cutsets was also used recently in~\cite{SM:09} for computing a bounding-type approximation of $P_f$. Nevertheless, there is no apparent connection between $F_f$ and the truth probability of a DNF formula. This implies the need for a novel technique for approximating $F_f$.

The main contributions of this work are as follows:
\begin{itemize}
\item We present a new algorithm that runs in polynomial time in the number of cutsets in the system, and approximates $F_f$ for $k$-terminal reliability systems (for any $k$) within any given multiplicative error. Moreover, for the special case where all nodes are terminals, we propose a new algorithm to compute an approximation of $F_f$ for any given multiplicative error factor that runs in polynomial time in $n$. To our knowledge, this is the first and only fast (polynomial time in $n$) and accurate (with arbitrary provable guarantees) algorithm for approximating $F_f$ for all-terminal reliability systems. 
\item We adapt the machinery of \cite{K:01}, which was tailored to approximate $P_f$, in a non-trivial way to approximate $F_f$. In particular, we present a new transformation to obtain $F_f$ from $P_f$ by introducing an auxiliary probability $P$, and re-writing $F_f$ as a scalar multiple of the difference between $P_f$ and $P$. To the best of our knowledge, this connection was not previously reported in the literature. This transformation enables us to convert the problem of approximating $F_f$ to the two sub-problems of approximating $P_f$ and $P$, each of which can be linked to a DNF formula via a carefully designed random process. 
\item It is well known that $P_f$ can be thought of as the probability that, under a random \emph{sampling process}, all components of a cutset in the system are unavailable. To relate $P$ to a random process, we define a new random process, referred to as the \emph{exposure process}, statistically independent from the sampling process, such that $P$ can be thought of as the probability that, under the sampling and exposure processes simultaneously, all components of a cutset in the system are unavailable and unexposed. 
\item We identify the weak cutsets of a system that satisfy the following property: under the sampling and exposure processes simultaneously, the probability that all components of a weak cutset are unavailable and unexposed provides multiplicative approximations of $P_f$ and $P$, and subsequently, a multiplicative approximation of $F_f$ for any arbitrary error factor. We also prove that the set of weak cutsets required for approximating $F_f$ contains (and is greater, but not more than a factor polynomial in $n$, than) the set of weak cutsets previously identified in \cite{K:01} for approximating $P_f$.
\item We show that MCS can provide an additive approximation of $F_f$ in polynomial time; whereas, using MCS, $F_f$ cannot be approximated in polynomial time within a multiplicative error. This suggests that approximating $F_f$ with a multiplicative error factor, when compared to approximating $F_f$ with an additive error factor, is computationally more expensive, and hence more challenging.
\item We compare the proposed technique and the MCS technique via simulations for approximating $F_f$  for a $3\times 3$ ($9$ nodes) grid network and for the layer 3 ($20$ nodes) and the layer 2 ($35$ nodes) of the Internet2 network~\cite{I2net}. (The results of the bounding technique are also given for reference.) Our simulation results show that the proposed technique, when compared to the MCS technique, provides approximations with higher accuracy, for less running time. 
\end{itemize}

The rest of the paper is organized as follows. Section~\ref{sec:PS} gives basic definitions/notations and the problem formulation. In Section~\ref{sec:Prel}, we overview the concepts of failure and repair rates, and cutsets and near-minimum cutsets. Sections~\ref{sec:App1} and~\ref{sec:App2} discuss the previous works and the proposed algorithms for $k$-terminal and all-terminal reliability systems, respectively. In Section~\ref{sec:SR}, we present our simulations results and compare the proposed technique for all-terminal reliability systems with the Monte Carlo simulation technique. Section~\ref{sec:COP} concludes the paper and discusses some open problems. The proofs of some lemmas are deferred to the appendix. 
	
\section{Problem Setup}\label{sec:PS}
Let $\Psi$ be a system with $n$ nodes, $k$ ($2\leq k\leq n$) of which are \emph{terminals} and the rest are \emph{relays}, and $m$ components (i.e., the edges connecting the nodes). The system $\Psi$ is called a \emph{$k$-terminal reliability system}, and for the case of $k=n$, the system $\Psi$ is called an \emph{all-terminal reliability system}. Fig.~\ref{fig:System} depicts a $3\times 3$ grid network as an example of an all-terminal reliability system with $n=k=9$ and $m=12$. (Note that the regularity of the system in Fig.~\ref{fig:System} is not a requirement for the applicability of the proposed techniques in this paper.)

Let $[m]\triangleq\{1,\dots,m\}$ be the index set of components in the system $\Psi$. Each component $i\in[m]$ is assumed to have two states: \emph{available} (up) and \emph{unavaiable} (down). We consider the setting that each component of the system $\Psi$ is subject to a statistically-independent continuous-time failure/repair random process described shortly, and study the steady-state behavior of the system $\Psi$ under such random processes.


\subsection{Failure/Repair Random Processes}
We assume that every component is initially available. Note that for a stationary ergodic stochastic process, such as a two-state process, the (steady-state) failure probability and (steady-state) failure frequency do not depend on the initial conditions~\cite{SB:77}. As time evolves, each component $i$ becomes unavailable after a random period of time, distributed \emph{arbitrarily} with mean $1/\lambda_i$ (for arbitrary $\lambda_i>0$), and it becomes available again after a random period of time, distributed \emph{arbitrarily} with mean $1/\mu_i$ (for arbitrary $\mu_i>0$). This process, for every component, continues over time, statistically independent from other components. 

Note that we do not restrict the distributions of up-times and down-times to be exponential. As shown in~\cite{S:80}, the (steady-state) failure probability and (steady-state) failure frequency of a system with two-state components depend only on the mean-up-times $\{1/\lambda_i\}$ and mean-down-times $\{1/\mu_i\}$ of the components in the system, and not on the up/down-time distribution functions per se.

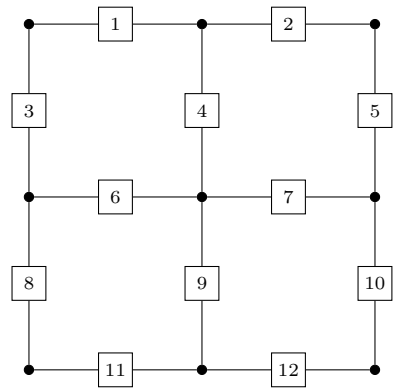
\begin{figure}
\centering
\begin{tikzpicture}
\tikzstyle{node}=[circle,draw=black,fill=black,minimum size=3.5pt,inner sep=0pt]
\tikzstyle{component}=[rectangle,draw=black,
                        inner sep=0pt,minimum size=4.5mm]
  \node at ( -3*1.15,4*1.15) [node] (n1) {};
  \node at ( -2*1.15,4*1.15) [component] (c1) {\scriptsize{$1$}};
  \node at ( -1*1.15,4*1.15) [node] (n2) {};
  \node at ( 0*1.15,4*1.15) [component] (c2) {\scriptsize{$2$}};
  \node at ( 1*1.15,4*1.15) [node] (n3) {};
  \node at ( -3*1.15,3*1.15) [component] (c4) {\scriptsize{$3$}};
  \node at ( -1*1.15,3*1.15) [component] (c5) {\scriptsize{$4$}};
  \node at ( 1*1.15,3*1.15) [component] (c6) {\scriptsize{$5$}};
  \node at ( -3*1.15,2*1.15) [node] (n5) {};
  \node at ( -2*1.15,2*1.15) [component] (c8) {\scriptsize{$6$}};
  \node at ( -1*1.15,2*1.15) [node] (n6) {};
  \node at ( 0*1.15,2*1.15) [component] (c9) {\scriptsize{$7$}};
  \node at ( 1*1.15,2*1.15) [node] (n7) {};
  \node at ( -3*1.15,1*1.15) [component] (c11) {\scriptsize{$8$}};
  \node at ( -1*1.15,1*1.15) [component] (c12) {\scriptsize{$9$}};
  \node at ( 1*1.15,1*1.15) [component] (c13) {\scriptsize{$10$}};
  \node at ( -3*1.15,0*1.15) [node] (n9) {};
  \node at ( -2*1.15,0*1.15) [component] (c15) {\scriptsize{$11$}};
  \node at ( -1*1.15,0*1.15) [node] (n10) {};
  \node at ( 0*1.15,0*1.15) [component] (c16) {\scriptsize{$12$}};
  \node at ( 1*1.15,0*1.15) [node] (n11) {};
  \draw (n1) -- (c1);
  \draw (c1) -- (n2);
  \draw (n2) -- (c2);
  \draw (c2) -- (n3);
  \draw (n1) -- (c4);
  \draw (c4) -- (n5);
  \draw (n2) -- (c5);
  \draw (c5) -- (n6);
  \draw (n3) -- (c6);
  \draw (c6) -- (n7);
  \draw (n5) -- (c8);
  \draw (c8) -- (n6);
  \draw (n6) -- (c9);
  \draw (c9) -- (n7);
  \draw (n5) -- (c11);
  \draw (c11) -- (n9);
  \draw (n6) -- (c12);
  \draw (c12) -- (n10);
  \draw (n7) -- (c13);
  \draw (c13) -- (n11);
  \draw (n10) -- (c16);
  \draw (c16) -- (n11);
  \draw (n9) -- (c15);
  \draw (c15) -- (n10);	
\end{tikzpicture}
\caption{An example of an all-terminal reliability system with $9$ nodes/terminals and $12$ components.}
\label{fig:System}
\end{figure}

\subsection{Failure Probability and Failure Frequency}
At any time, the (sub-) system of $\Psi$ including all $n$ nodes restricted only to available components (i.e., the original system excluding unavailable components), is referred to as the \emph{surviving system}. At any time, the system is \emph{unavailable} if the surviving system fails to connect all $k$ terminals. The (steady-state) \emph{probability of failure} is the probability that the system is unavailable, and the (steady-state) \emph{frequency of failure} is the expected number of times per unit time (i.e., the expected rate) that the system becomes unavailable~\cite{SB:77}.

\subsection{Problem Statement}
For arbitrary $\epsilon>0$ and $0<\delta<1$, we consider the problem of computing $(\epsilon,\delta)$-\emph{approximations} of $P_f$ and $F_f$ (with an error factor of at most $\epsilon$ and an error probability of at most $\delta$), denoted by $\tilde{P}_{f}$ and $\tilde{F}_f$, respectively, defined as \[\Pr\left\{|\tilde{P}_f-P_f|\geq \epsilon P_f \right\}\leq \delta,\] and \[\Pr\left\{|\tilde{F}_f-F_f|\geq \epsilon F_f \right\}\leq \delta.\] 

\subsection{Complexity Notation}\label{subsec:CN}
Throughout the paper, we follow the conventional use of notations in complexity theory as described below. 

For two arbitrary functions $f(x)$ and $g(x)$ of variable $x$, we write: (i) $f(x)=O(g(x))$ if for sufficiently large $x$, $|f(x)|\leq c\cdot |g(x)|$ for some constant $c>0$; (ii) $f(x)=\Omega(g(x))$ if for sufficiently large $x$, $f(x)\geq c\cdot g(x)$ for some constant $c>0$; (iii) $f(x)=\Theta(g(x))$ if $f(x)=O(g(x))$ and $f(x)=\Omega(g(x))$; and (iv) $f(x)=o(g(x))$ if for any constant $c>0$, $|f(x)|\leq c\cdot |g(x)|$ for sufficiently large $x$.

\section{Preliminaries}\label{sec:Prel}
\subsection{Failure/Repair Rates and Unavailability Probabilities}\label{subsec:FRR}

We refer to $\lambda_i$ and $\mu_i$ as the \emph{failure rate} and the \emph{repair rate} of component $i$, respectively. We assume that $\lambda_i$ and $\mu_i$ are independent of time (i.e., the failure/repair random process is stationary). We also assume that $\lambda_i$ and $\mu_i$ do not depend on $n$. Since $\lambda_i$ and $\mu_i$ are independent of time, the probability that each component $i$ is unavailable at any given time is equal to $p_i\triangleq {\lambda_i}/{(\lambda_i+\mu_i)}$. Note that the probabilities $\{p_i\}$ are independent of time. We refer to $p_i$ as the \emph{unavailability probability} of component $i$, and refer to $w_i=-\log p_i$ as the \emph{weight} of component $i$, where the symbol ``$\log$'', here and throughout the paper, refers to the natural logarithm. Since $0<p_i<1$ for all $i$, the weights $\{w_i\}$ are all non-negative. 


Define $\lambda_{\text{max}}\triangleq \max_{i\in [m]} \lambda_i$ as the maximum failure rate of a component. Similarly, define $\mu_{\text{min}}$ as the minimum repair rate of a component. For simplifying the arguments, we assume that $\mu_{\text{min}}/\lambda_{\text{max}}> m-1$. Define $w_{\text{max}}$ as the maximum weight of a component. Moreover, let $\lambda\triangleq \sum_{i=1}^{m} \lambda_i$, $\mu\triangleq\sum_{i=1}^{m} \mu_i$, and $w\triangleq \sum_{i=1}^{m} w_i$. 

We assume, without loss of generality, that there are no parallel components in the system. Otherwise, if there exist $l$ parallel components $\{i_j\}_{1\leq j\leq l}$ with failure rates $\{\lambda_{i_j}\}$, repair rates $\{\mu_{i_j}\}$, and unavailability probabilities $\{p_{i_j}\}$, we can replace them all by one component with unavailability probability $p=\prod_{j} p_{i_j}$, repair rate $\sum_{j} \mu_{i_j}$, and failure rate $(p\sum_{j}\mu_{i_j})/(1-p)$. This transformation does not change the failure probability and the failure frequency of a system~\cite{SB:77}. Note that this assumption is made only for the ease of exposition, and is not a requirement for the proposed algorithms. 




\subsection{Cutsets and Near-Minimum Cutsets}\label{subsec:CNMC}
A set of components in a system is a \emph{cutset} if the unavailability of those components yields the unavailability of the system. Moreover, a cutset is a \emph{minimal cutset} if it does not contain any other cutsets. Hereafter, we use the term ``cutset'' as a shorthand for ``minimal cutset.'' Let $N$ be the number of cutsets in a system, and let $C \triangleq \{\mathcal{C}_1,\dots,\mathcal{C}_N\}$ be the set of all cutsets in the system. Let $\mathcal{C}_j \triangleq \{{i_1},\dots,{i_{|\mathcal{C}_j|}}\}$, where ${i_1},\dots,{i_{|\mathcal{C}_j|}}$ are the \emph{indices of components} in $\mathcal{C}_j$, and $|\mathcal{C}_j|$ is the \emph{size} of $\mathcal{C}_j$, i.e., the number of components in $\mathcal{C}_j$. Hereafter, we assume that $N>1$ (otherwise, computing $P_f$ and $F_f$ is trivial). For example, in Fig.~\ref{fig:System}, there exist $4$ and $16$ cutsets of sizes $2$ and $3$, respectively, as enumerated in Table~\ref{tab:Cutsets}. 

For any $\mathcal{I}\subseteq [N]$, let $\mathcal{C}_{\mathcal{I}}\triangleq \cup_{j\in \mathcal{I}} \mathcal{C}_j$. The probability of failure of cutsets $\{\mathcal{C}_j\}_{j\in\mathcal{I}}$, denoted by $p(\mathcal{C}_{\mathcal{I}})$, is equal to the probability that all components in $\mathcal{C}_j$, for all ${j\in \mathcal{I}}$, are unavailable, i.e., $p(\mathcal{C}_{\mathcal{I}}) \triangleq \prod_{i\in \mathcal{C}_{\mathcal{I}}} p_{i}$. Note that $p(\mathcal{C}_j)$ is the probability of unavailability of all components in $\mathcal{C}_j$. Let $w(\mathcal{C}_j)$, the \emph{weight} of $\mathcal{C}_j$, be the sum of the weights of all components in $\mathcal{C}_j$, i.e., $w(\mathcal{C}_j) \triangleq \sum_{i\in \mathcal{C}_j} w_{i}$. Note that $w(\mathcal{C}_j) = -\log(p(\mathcal{C}_j))$. Let $w^{*}\triangleq\min_{\mathcal{C}\in C} w(\mathcal{C})$ and $p^{*}\triangleq\max_{\mathcal{C}\in C} p(\mathcal{C})$ be the minimum weight and maximum failure probability of a cutset, respectively. Let $s^{*}\triangleq\min_{\mathcal{C}\in  C} |\mathcal{C}|$ be the minimum size of a cutset. Note that $s^{*}\geq \min\{\max\{w^{*}/w_{\text{max}},1\},m-1\}$. 



For any constant $\alpha\geq 1$ (independent of $n$ and $m$), let $C^{(\alpha)}\triangleq\{\mathcal{C}\in C: w(\mathcal{C})\leq \alpha w^{*}\}$ be the set of all (minimal) cutsets in a system of weight less than or equal to $\alpha w^{*}$. Let $N^{(\alpha)}$ be the number of cutsets in $C^{(\alpha)}$. We refer to the cutsets in $C^{(\alpha)}$ as \emph{$\alpha$-min cutsets}. For simplicity, we refer to $1$-min cutsets as \emph{min-cutsets}. Note that $w(\mathcal{C})=w^{*}$ and $p(\mathcal{C})=p^{*}$ for all min-cutsets $\mathcal{C}\in C$.

For the cases in which all components have the same weight $w_0$, there is a one-to-one map between the weights and the sizes of the cutsets (i.e., the weight of each cutset is equal to the size of that cutset times $w_0$). For example, for the case that all components in Fig.~\ref{fig:System} have unit weight (i.e., all cutsets of size $2$ or size $3$ have weight $2$ or weight $3$, respectively), Table~\ref{tab:Cutsets} enumerates the $1.5$-min cutsets of the system in Fig.~\ref{fig:System}.


\begin{table}[t]
\centering
\caption{Minimal Cutsets of Size $2$ and $3$ in the System of Fig.~\ref{fig:System}}
\label{tab:Cutsets}
\begin{tabular}{|c|cccc|}
\hline
Cutsets of Size $2$  & \multicolumn{4}{c|}{Cutsets of Size $3$} \\ \hline
$\{1,3\}$ & $\{1,2,4\}$ & $\{2,3,4\}$ & $\{1,6,11\}$ & $\{8,9,10\}$ \\
$\{2,5\}$ & $\{1,6,8\}$ & $\{3,4,5\}$ & $\{2,7,10\}$ & $\{8,9,12\}$ \\
$\{8,11\}$ & $\{1,4,5\}$ & $\{5,7,10\}$ & $\{2,7,12\}$ & $\{9,10,11\}$ \\
$\{10,12\}$ & $\{3,6,8\}$ & $\{5,7,12\}$ & $\{3,6,11\}$ & $\{9,11,12\}$ \\ \hline
\end{tabular}
\end{table}



\section{A Poly$(N)$-Time Approximation Algorithm for $k$-Terminal Reliability Systems}\label{sec:App1}












\subsection{Background}\label{subsec:PreWork1}
In this section, we overview the inclusion-exclusion based formulas for $P_f$ and $F_f$, and the bounding technique for approximating $P_f$ and $F_f$. The reader familiar with these concepts can skip this section. 

By the cutset approach~\cite{SB:77} based on the inclusion-exclusion principle, $P_f$ and $F_f$ can be written as:
\begin{dmath}\label{eq:Pf}
P_f = \sum_{l=1}^{N} \left((-1)^{l+1} \sum_{\mathcal{I}\subseteq [N]: |\mathcal{I}|=l} p(\mathcal{C}_{\mathcal{I}})\right)
\end{dmath} and 
\begin{dmath}\label{eq:Ff}
F_f = \sum_{l=1}^{N} \left((-1)^{l+1} \sum_{\mathcal{I}\subseteq [N]: |\mathcal{I}|=l} \left(p(\mathcal{C}_{\mathcal{I}})\sum_{i\in \mathcal{C}_{\mathcal{I}}}\mu_i\right)\right)
\end{dmath} (The details of derivation of the formulas~\eqref{eq:Pf} and~\eqref{eq:Ff} can be found in~\cite{Sn:79}.) Note that, depending on the topology (e.g., ring topology), the number of cutsets ($N$) in some systems is only poly$(n)$, and all cutsets can be enumerated in $O(N)$ time~\cite{TSOA:80}. However, the complexity of computing $P_f$ and $F_f$ using the formulas \eqref{eq:Pf} and \eqref{eq:Ff} may be still unaffordable. (The number of terms in formulas~\eqref{eq:Pf} and~\eqref{eq:Ff} is exponential in the number of cutsets ($N$) and double-exponential in the number of nodes ($n$). Thus, there does not exist a poly$(n)$-time (i.e., with running time polynomial in $n$) nor a poly$(N)$-time (i.e., with running time polynomial in $N$) algorithm for computing $P_f$ and $F_f$ from~\eqref{eq:Pf} and~\eqref{eq:Ff}, directly.) For such systems, one may use the bounding technique to approximate $P_f$ or $F_f$. 

The bounding technique is one of the most common approaches to provide upper and lower bounds on $P_f$ or $F_f$ via truncating the formula~\eqref{eq:Pf} or~\eqref{eq:Ff}, respectively~\cite{S:77}. 


Let 
\begin{dmath*}
P_f^{+} = \sum_{1\leq j\leq N} p(\mathcal{C}_j),
\end{dmath*} and 
\begin{dmath*}
P_f^{-} = \sum_{1\leq j\leq N} p(\mathcal{C}_j)-\sum_{1\leq j_1<j_2\leq N} p(\mathcal{C}_{j_1}\cup \mathcal{C}_{j_2}).	
\end{dmath*} Similarly, let 
\begin{dmath*}
F_f^{+} =\sum_{1\leq j\leq N} \left(p(\mathcal{C}_j) \sum_{i\in \mathcal{C}_j} \mu_i\right),
\end{dmath*} and 
\begin{dmath*} F_f^{-} = \sum_{1\leq j\leq N} \left(p(\mathcal{C}_j) \sum_{i\in \mathcal{C}_j} \mu_i\right) - \sum_{1\leq j_1<j_2\leq N}\left(p(\mathcal{C}_{j_1} \cup \mathcal{C}_{j_2})\sum_{i\in \mathcal{C}_{j_1}\cup \mathcal{C}_{j_2}}\mu_i \right).	
\end{dmath*}
 Then, $P_f^{+}$ and $P_f^{-}$ (or $F_f^{+}$ and $F_f^{-}$) are the \emph{first-order upper-bound} and \emph{lower-bound} on $P_f$ (or $F_f$), respectively. Using a similar technique by incorporating larger collections of cutsets (instead of singletons or pairs only), one can compute higher-order upper- and lower-bounds on $P_f$ and $F_f$ \cite{S:77}. Such higher-order bounds, when compared to the first-order bounds, are more accurate, but more computationally expensive.

Now a question is what type of guarantee the bounding technique provides on the accuracy of the approximation. To answer this question, let $d_P$ (or $d_F$) be the maximum number of decimal places up to which $P_f^{+}$ and $P_f^{-}$ (or $F_f^{+}$ and $F_f^{-}$) match. Let $\hat{P}_f = \mathrm{trunc}(P_f^{+},d_P) = \mathrm{trunc}(P_f^{-},d_P)$ and $\hat{F}_f=\mathrm{trunc}(F_f^{+},d_F)=\mathrm{trunc}(F_f^{-},d_F)$, where $\mathrm{trunc}(x,d)={\lfloor 10^d\cdot x\rfloor}/{10^d}$, for any real number $x\geq 0$ and integer $d\geq 1$. Note that $\hat{P}_f$ and $\hat{F}_f$ are the most accurate estimators of $P_f$ and $F_f$ based on the (first-order) upper and lower bounds. Both bounds $\hat{P}_f$ and $\hat{F}_f$ are computable in poly$(N)$ time, yet neither guarantees an arbitrary approximation error factor. To be more specific, $\hat{P}_f$ and $\hat{F}_f$ are always exact up to $d_P$ and $d_F$ decimal places, respectively, but the approximation error depends on $d_P$ and $d_F$, and cannot be made arbitrarily small.


\subsection{Main Ideas of the Proposed Algorithm}\label{subsec:MainIdea1}
In this section, we give an overview of the KLM estimator as part of the proposed algorithms, and explain our main ideas. 

Let the symbols ``$\wedge$'' and ``$\vee$'' denote the \emph{logical conjunction} (AND) and the \emph{logical disjunction} (OR), respectively. Let \[\Phi_{M}=Z_1\vee Z_2\vee\dots\vee Z_{M}\] be a formula on $M>1$ variables $\{Z_j\}_{j=1}^{M}$, where the \emph{clause} $Z_j$ is a conjunction of \emph{literals} $z_{i}$ for some $i\in [m]$, i.e., $Z_j=\wedge_{i\in I_j} z_{i}$ for some $I_j\subseteq [m]$. Each literal $z_{i}$ is either a Boolean variable or the negation of a Boolean variable, and it takes two values: ``true'' and ``false.'' The formula $\Phi_{M}$ of such form is said to have \emph{disjunctive normal form} (DNF). Let $P_Z(j)$ be the probability that the clause $Z_j$ is true (i.e., the literals $\{z_{i}\}_{i\in I_j}$ are all true), and let $P_Z$ denote the vector $[P_Z(1),\dots,P_Z(M)]$. Note that the truth probability of $\Phi_M$, denoted by $\Pi_{M}$, cannot be computed in poly$(M)$-time (i.e., polynomial-time in $M$) \cite{KLM:89}. However, using an unbiased estimator, referred to as KLM, due to Karp, Luby, and Madras \cite{KLM:89}, one can compute a $(\xi,\delta)$-approximation of $\Pi_{M}$ in poly$(M)$ time (for any given $\xi>0$ and $0<\delta<1$). In the following, we describe a simple, yet powerful, extension of the KLM estimator. 

The KLM estimator with inputs $(\Phi_{M},P_Z; \xi,\delta)$ proceeds in steps as follows: 
\begin{itemize}
\item[0.] Initialize the counters $s$ and $t$ by setting $s=t=1$;
\item[1.] Choose a random clause $Z_j$, with probability of selecting $Z_j$ being equal to ${P_Z(j)}/{Q_Z}$, where $Q_Z=\sum_{j} P_Z(j)$;
\item[2.] Choose a random assignment $\boldsymbol{z}=\{z_1,\dots,z_m\}$ satisfying the clause $Z_j$ (i.e., $z_i$ is true for all $i\in I_j$); 
\item[3.] Compute $\pi_{s,t}={Q_Z}/{N(\boldsymbol{z})}$, where $N(\boldsymbol{z})$ is the number of clauses that the assignment $\boldsymbol{z}$ satisfies; 
\item[4.] $s\leftarrow s+1$  
\item[5.] Repeat Steps 1-4 $S=\lceil {4(M-1)}/{\xi^2}\rceil$ times;
\item[6.] Compute the mean $\pi_{t}=\left({\sum_{s=1}^{S} \pi_{s,t}}\right)/{S}$; 
\item[7.] $t\leftarrow t+1$ 
\item[8.] Repeat Steps 1-7 $T = \lceil 12 \log({1}/{\delta})\rceil$ times;
\item[9.] Return the median of $\{\pi_{t}\}_{t=1}^{T}$.
\end{itemize} 

The running time of an obvious implementation of the KLM estimator is $O((M^2 m/\xi^2) \log(1/\delta))$. In particular, Step~1 takes $O(M)$ time to run. Step~2 can be simply run in $O(m)$ time; Step~3 takes $O(M m)$ time to run; and each of Steps~1-3 is run $ST = O((M/\xi^2)\log(1/\delta))$ times; Step~6 can be run in $O(S)$ time, and this step is run $T$ times, and Step~9 can be run in $O(T)$ time, and this step is run only once. Note that a more sophisticated implementation of the KLM estimator, referred to as \emph{self-adjusting}, can be run in $O((M m/\xi^2)\log(1/\delta))$ time (see, for more details, \cite{KLM:89}).

Let $\tilde{\Pi}_{M}$ be the output of the KLM estimator. Then, the following result holds.

\begin{lemma}\label{lem:KLME}
$\tilde{\Pi}_{M}$ is a $({\xi},{\delta})$-approximation of $\Pi_M$.\end{lemma}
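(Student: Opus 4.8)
The plan is to reproduce the classical three-stage analysis of the Karp--Luby--Madras estimator: show that a single sample $\pi_{s,t}$ is an unbiased estimator of $\Pi_M$, bound its variance in terms of $\Pi_M^2$, and then combine the variance reduction from averaging $S$ samples (via Chebyshev) with the confidence amplification from taking the median of $T$ runs (via a Chernoff bound) to obtain the stated $(\xi,\delta)$-guarantee. Since $\Pi_M = \Pr\!\left[\bigvee_j Z_j\right]$ under the underlying product measure on assignments, the whole argument hinges on interpreting Steps~1--3 as sampling from a carefully weighted version of this measure.

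First I would pin down the joint law induced by Steps~1--2. Selecting clause $Z_j$ with probability $P_Z(j)/Q_Z$ and then drawing $\boldsymbol{z}$ from the conditional distribution given that $Z_j$ is true yields the joint probability $\Pr[(j,\boldsymbol{z})] = \Pr[\boldsymbol{z}]\,\mathbbm{1}[\boldsymbol{z}\models Z_j]/Q_Z$, because the factor $P_Z(j)$ cancels against the normalizer of the conditional law. Computing $\mathbb{E}[\pi_{s,t}]$ by summing $Q_Z/N(\boldsymbol{z})$ against this law then cancels $Q_Z$, and the inner sum collapses via $\sum_j \mathbbm{1}[\boldsymbol{z}\models Z_j]=N(\boldsymbol{z})$, leaving $\sum_{\boldsymbol{z}:\,N(\boldsymbol{z})\geq 1}\Pr[\boldsymbol{z}]=\Pi_M$. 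Hence each $\pi_{s,t}$ is unbiased. The same collapse applied to the second moment gives $\mathbb{E}[\pi_{s,t}^2]=Q_Z\sum_{\boldsymbol{z}:\,N(\boldsymbol{z})\geq 1}\Pr[\boldsymbol{z}]/N(\boldsymbol{z})\leq Q_Z\,\Pi_M$, using $1/N(\boldsymbol{z})\leq 1$.

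Next I would turn this into a usable variance bound. The crucial inequality is $Q_Z\leq M\,\Pi_M$, which holds because every assignment satisfies at most $M$ clauses, so $Q_Z=\sum_j P_Z(j)=\mathbb{E}[N(\boldsymbol{z})]\leq M\,\Pr[N(\boldsymbol{z})\geq 1]=M\,\Pi_M$. Combining this with the second-moment bound yields $\mathrm{Var}(\pi_{s,t})\leq Q_Z\,\Pi_M-\Pi_M^2\leq (M-1)\Pi_M^2$. Averaging $S$ independent copies gives $\mathrm{Var}(\pi_t)\leq (M-1)\Pi_M^2/S$, and the prescribed $S=\lceil 4(M-1)/\xi^2\rceil$ forces this to be at most $\xi^2\Pi_M^2/4$; Chebyshev's inequality then delivers $\Pr\!\left[|\pi_t-\Pi_M|\geq \xi\Pi_M\right]\leq 1/4$, so each $\pi_t$ is a $\xi$-approximation with probability at least $3/4$.

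Finally I would amplify the confidence through the median. Since the $T$ means $\pi_t$ are independent and each lies outside $[(1-\xi)\Pi_M,(1+\xi)\Pi_M]$ with probability at most $1/4$, the number of ``bad'' runs is stochastically dominated by a $\mathrm{Binomial}(T,1/4)$ variable of mean $T/4$; the returned median $\tilde{\Pi}_M$ is bad only if at least $T/2$ of the runs are bad, and a Chernoff bound gives $\Pr\!\left[\text{median bad}\right]\leq e^{-T/12}\leq\delta$ for $T=\lceil 12\log(1/\delta)\rceil$, which is exactly the assertion that $\tilde{\Pi}_M$ is a $(\xi,\delta)$-approximation of $\Pi_M$. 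I expect the main obstacle to be calibrating the variance bound so that the prescribed $S$ (and hence the factor $4$) is sufficient; the single inequality $Q_Z\leq M\,\Pi_M$ is what makes everything fit, since it simultaneously controls $\mathrm{Var}(\pi_{s,t})$ and keeps each sample within a factor $M$ of $\Pi_M$ (as $Q_Z/M\leq \pi_{s,t}\leq Q_Z$ together with $\Pi_M\leq Q_Z\leq M\Pi_M$), guaranteeing the estimator is well-behaved.
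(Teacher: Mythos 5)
Your proposal is correct and follows essentially the same route as the paper's proof: unbiasedness of $\pi_{s,t}$ via the collapse $\sum_j \mathbbm{1}[\boldsymbol{z}\models Z_j]=N(\boldsymbol{z})$, the second-moment bound $\mathbb{E}(\pi_{s,t}^2)\leq Q_Z\Pi_M$ combined with $Q_Z\leq M\Pi_M$ to get $\mathrm{var}(\pi_{s,t})\leq (M-1)\Pi_M^2$, Chebyshev with $S=\lceil 4(M-1)/\xi^2\rceil$ to reach failure probability $1/4$ per run, and median amplification via a Chernoff bound giving $e^{-T/12}\leq\delta$. The only cosmetic difference is that you identify $Q_Z=\mathbb{E}[N(\boldsymbol{z})]$ explicitly, whereas the paper reaches $Q_Z\leq M\Pi_M$ by bounding each $P_Z(j)$ by the truth probability of $\Phi_M$; both are the same estimate.
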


\begin{proof}
The proof follows from similar arguments as those in~\cite{KLM:89}, and can be found in the appendix.	
\end{proof}

As can be seen in \eqref{eq:Pf}, $P_f$ is the probability of union of a set of events, and consequently, it can be thought of as the probability of satisfying a DNF formula with random Boolean variables. Thus, the KLM estimator can compute an $(\epsilon,\delta)$-approximation of $P_f$ for $k$-terminal reliability systems (for any $k$) in poly$(N)$ time. However, as one can see in~\eqref{eq:Ff}, $F_f$ is not the probability of union of any set of events. Thus, $F_f$ cannot be thought of as the probability of satisfying a DNF formula, and the KLM estimator is not directly applicable. This poses a challenge for approximating $F_f$. To overcome this challenge, we define an auxiliary term $P$ that satisfies the following requirements: (i) $P$ can be interpreted as the probability of the union of an auxiliary set of events, and (ii) $F_f$ is a scalar multiple of the difference between $P_f$ and $P$. Note that, for the first time in the literature, this work presents a linear connection between $P_f$ and $F_f$. 

Each of $P_f$ and $P$ is the probability of union of a set of events, and thus can be approximated by the KLM estimator within an arbitrary multiplicative error. Now the main questions are whether $P_f$ and $P$ can be approximated in polynomial time in the number of cutsets, and whether combining such approximations of $P_f$ and $P$, a multiplicative approximation of $F_f$ with an arbitrary error factor can be computed. In this section, we answer these questions in the affirmative, and propose a poly$(N)$-time algorithm that computes an $(\epsilon,\delta)$-approximation of $F_f$ for $k$-terminal reliability systems (for any $k$).  

We define $P$ that satisfies the requirements (i) and (ii) as follows:

\begin{dmath}\label{eq:P}\nonumber
P = \sum_{l=1}^{N} \left((-1)^{l+1}\sum_{\mathcal{I}\subseteq [N]: |\mathcal{I}|=l} \left(P(\mathcal{C}_{\mathcal{I}}) \left(1-\frac{\sum_{i\in \mathcal{C}_{\mathcal{I}}}\mu_i}{\mu}\right)\right)\right).
\end{dmath} By using~\eqref{eq:Pf}--\eqref{eq:P}, it is easy to verify that 
\begin{equation}\label{eq:FfPfPmmu}
F_f=(P_f-P)\mu,	
\end{equation}
where $F_f$ and $P_f$ are given by~\eqref{eq:Pf} and~\eqref{eq:Ff}, respectively. Thus, $F_f$ is a scalar multiple of the difference between $P_f$ and $P$, as desired. Note that $P_f> P$ since $F_f>0$. (We notice, without proof, that $F_f=0$ only in a system with $p_i=0$ for all $i$, and this contradicts with the assumption that $0<p_i<1$.)

Now, we define an auxiliary set of events such that $P$, defined in~\eqref{eq:P}, is the probability of union of these events. In the steady state, at any given time, each component $i$ is unavailable or available, with probability $p_i$ or $1-p_i$, respectively, independent of time \cite{SB:77}. Thus, the random process under consideration (Section~\ref{sec:PS}) is equivalent to the following \emph{one-shot} random process over the system $\Psi$. Each component $i$, statistically independently from other components, is set to be ``unavailable'' with probability $p_i$, and it is set to be ``available'' otherwise. We refer to this process as \emph{sampling}. 

We further introduce an auxiliary \emph{one-shot} random process over the system $\Psi$ as follows. Each component is assumed to have two states: \emph{exposed} and \emph{unexposed}. One component, say $i$, is chosen with probability $\mu_i/\mu$, and is set to be exposed; and the rest of the components are set to be unexposed. We refer to this process as \emph{exposure}. Note that the sampling and exposure processes are statistically independent. 

The intuition behind the sampling and exposure processes is as follows. Each term in $P$ corresponds to a collection of cutsets and expresses the probability that all components in this collection of cutsets are unavailable and unexposed. Moreover, as can be seen in~\eqref{eq:P}, $P$ has the structure of an inclusion-exclusion formula. Thus, it should not be hard to see that $P$ is equal to the probability that all components in some cutset of the system $\Psi$ (under the sampling and exposure processes) are unavailable and unexposed (due to the statistical independence of these processes). 


\subsection{Proposed Algorithm}\label{subsec:PA1}
For a $k$-terminal reliability system (for arbitrary $k$), the inputs of the proposed algorithm are the failure rates $\{\lambda_i\}$, the repair rates $\{\mu_i\}$, and the approximation parameters $\epsilon>0$ and $0<\delta<1$. The algorithm proceeds in steps as follows:
\begin{itemize}
\item[0.] Initialization:
\begin{itemize}
\item[0.1] Enumerate all $N$ (minimal) cutsets of the system;
\item[0.2] Compute $s^{*}$, $\mu_{\text{min}}$, $\lambda_{\text{max}}$, $\mu$, and $\rho=\mu_{\text{min}}s^{*}-\lambda_{\text{max}} (m-s^{*})$;
\item[0.3] Take $\xi=(\epsilon/2)(\rho/\mu)$
\end{itemize}
\item[1.] Compute a $(\xi,\delta/2)$-approximation $\tilde{P}$ of $P$ using the KLM estimator; 
\item[2.] Compute a $(\xi,\delta/2)$-approximation $\tilde{P}_f$ of $P_f$ using the KLM estimator; 
\item[3.] Return $\tilde{F}_f = (\tilde{P}_f-\tilde{P})\mu$. 	
\end{itemize}


The running time of this algorithm is $O((N m^3\log N) (1/\epsilon^2)\log(1/\delta))$. (The running time of each step of the algorithm is given in Section~\ref{subsec:CC1}.)

The details of the computations of $\tilde{P}_f$ and $\tilde{P}$ (Steps~1 and~2) are as follows. 

\subsubsection{Computation of $\tilde{P}$}\label{subsubsec:TildeP1}
Construct a DNF formula as \[\Phi_{N}\triangleq Z_1\vee Z_2\vee\dots\vee Z_{N},\] where the clause $Z_j$, $j\in [N]$, is the conjunction of two \emph{sub-clauses} $X_j$ and $Y_j$ (i.e., $Z_j=X_j \wedge Y_j$). Define \[X_j\triangleq x_{i_1}\wedge x_{i_2}\wedge\dots\wedge x_{i_{|\mathcal{C}_j|}},\] and \[Y_j\triangleq y_{i_1}\wedge y_{i_2}\wedge\dots\wedge y_{i_{|\mathcal{C}_j|}},\] where $i_1,\dots,i_{|\mathcal{C}_j|}$ are the labels of the components in the cutset $\mathcal{C}_j$, and the literals $x$ and $y$ are Boolean (random) variables defined as follows. For any (random) assignment of literals, $x_i$ is true with probability $p_i=\lambda_i/(\lambda_i+\mu_i)$, and it is false, otherwise; and $y_i$ is true for all $i$, except for one and only one $i$ being chosen with probability $\mu_{i}/\mu$. Define \[P_Z(j) \triangleq p(\mathcal{C}_j) \left(1-\frac{\sum_{i\in \mathcal{C}_j} \mu_i}{\mu}\right).\] Run the KLM estimator with inputs $(\Phi_N,P_Z;\xi,\delta/2)$, and denote by $\tilde{P}$ the output.

\subsubsection{Computation of $\tilde{P}_f$}\label{subsubsec:TildePf1}
The method of computing $\tilde{P}_f$ is similar to that of $\tilde{P}$, except that a slightly different DNF formula is required. This technique was previously used in~\cite{K:01}. Construct a DNF formula as \[\Phi_{N,f}\triangleq X_1\vee X_2\vee\dots\vee X_{N},\] where the clause $X_j$, $j\in [N]$, is defined as before. Define \[P_{X}(j)\triangleq p(\mathcal{C}_j).\] Run the KLM estimator with inputs $(\Phi_{N,f},P_X;\xi,\delta/2)$, and denote by $\tilde{P}_f$ the output. 

\subsection{Theoretical Analysis}\label{subsec:TA1}

\begin{theorem}\label{thm:Main1}
The output of the algorithm in Section~\ref{subsec:PA1}, $\tilde{F}_f$, is an $(\epsilon,\delta)$-approximation of $F_f$. 
\end{theorem}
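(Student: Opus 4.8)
The plan is to exploit the exact identity $F_f = (P_f - P)\mu$ from \eqref{eq:FfPfPmmu}, which turns a multiplicative guarantee on $\tilde{F}_f = (\tilde{P}_f - \tilde{P})\mu$ into the problem of propagating the two independent KLM guarantees through a subtraction. First I would set up the \emph{good event}. By definition of a $(\xi,\delta/2)$-approximation we have $\Pr\{|\tilde{P}_f - P_f| \ge \xi P_f\} \le \delta/2$ and $\Pr\{|\tilde{P} - P| \ge \xi P\} \le \delta/2$, so a union bound gives that with probability at least $1-\delta$ both $|\tilde{P}_f - P_f| < \xi P_f$ and $|\tilde{P} - P| < \xi P$ hold simultaneously. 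It then suffices to show that on this event $|\tilde{F}_f - F_f| < \epsilon F_f$, since then the bad event $\{|\tilde{F}_f - F_f| \ge \epsilon F_f\}$ is contained in the complement of the good event and hence has probability at most $\delta$.

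On the good event I would write $\tilde{F}_f - F_f = \big((\tilde{P}_f - P_f) - (\tilde{P} - P)\big)\mu$ and apply the triangle inequality to get $|\tilde{F}_f - F_f| \le (|\tilde{P}_f - P_f| + |\tilde{P} - P|)\mu < \xi(P_f + P)\mu$. Comparing against $\epsilon F_f = \epsilon(P_f - P)\mu$, it remains to verify the deterministic inequality $\xi(P_f + P) \le \epsilon(P_f - P)$. Substituting the algorithm's choice $\xi = (\epsilon/2)(\rho/\mu)$ and using $P < P_f$ (which holds since $F_f > 0$), we have $\xi(P_f + P) < 2\xi P_f = \epsilon(\rho/\mu)P_f$, so the inequality reduces to $\rho P_f \le \mu(P_f - P) = F_f$, i.e. to the single clean bound $F_f \ge \rho P_f$.

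The hard part will be establishing this last bound, $F_f \ge \rho P_f$ with $\rho = \mu_{\text{min}} s^{*} - \lambda_{\text{max}}(m - s^{*})$; everything above is routine error bookkeeping. I would prove it by reading $F_f$ as a transition rate: by the interpretation underlying $P$, the quantity $F_f = \mu(P_f - P)$ equals the steady-state rate of those repair transitions that carry the system from a failed state to a working one, namely $F_f = \sum_{D \text{ failed}} p(D)\sum_{i \in D}\mu_i\,\mathbf{1}[D\setminus\{i\}\text{ works}]$, while $P_f = \sum_{D\text{ failed}} p(D)$. A state-by-state comparison fails, since a failed state containing two disjoint fully-down cutsets admits no single repair that restores the system, so the argument must be aggregate. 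The natural route is a stationary drift/flux-balance identity on the failed set $\mathcal{S}_f$: every failed state has at least $s^{*}$ down components, so its total repair rate is at least $\mu_{\text{min}} s^{*}$, whereas the competing rate of further failures (which keep the system failed) is at most $\lambda_{\text{max}}(m - s^{*})$; balancing the probability flux across the boundary of $\mathcal{S}_f$ and cancelling the failed-to-failed transitions should convert the total-repair-rate bound into the boundary-crossing rate and yield $F_f \ge (\mu_{\text{min}} s^{*} - \lambda_{\text{max}}(m - s^{*}))P_f = \rho P_f$. The standing assumption $\mu_{\text{min}}/\lambda_{\text{max}} > m-1$ ensures $\rho > 0$, so the bound is non-vacuous. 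I would isolate this frequency-versus-probability inequality as a separate lemma and defer its flux computation, as it is the only non-mechanical ingredient of the theorem.
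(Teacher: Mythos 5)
Your proposal is correct and follows essentially the same route as the paper: the identity $F_f=(P_f-P)\mu$, a union bound over the two KLM guarantees, the reduction to $\xi(P_f+P)\mu\leq \epsilon F_f$ via $P<P_f$, and the key bound $F_f\geq \rho P_f$. The flux-balance lemma you defer is exactly the paper's cited system-state formula~\eqref{eq:FFMC}, $F_f=\sum_{\boldsymbol{s}\in\mathcal{S}_f}p(\boldsymbol{s})\bigl(\sum_{i\in\mathcal{I}(\boldsymbol{s})}\mu_i-\sum_{i\in[m]\setminus\mathcal{I}(\boldsymbol{s})}\lambda_i\bigr)$, in which the failed-to-failed cancellation is already built in, so the net-flux integrand \emph{is} bounded below by $\rho$ state by state via $|\mathcal{I}(\boldsymbol{s})|\geq s^{*}$.
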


\begin{proof} By the definitions of $X_j$ and $Y_j$ as above, it is not hard to see that (i) $P_{X}(j)$ is the probability that a random assignment $\boldsymbol{x}=\{x_{1},\dots,x_m\}$ satisfies $X_j$ (i.e., $X_j$ is true), and (ii) $P_Y(j)\triangleq 1-\sum_{i\in \mathcal{C}_j}\mu_i/\mu$, is the probability that a random assignment $\boldsymbol{y}=\{y_{1},\dots,y_m\}$ satisfies $Y_j$ (i.e., $Y_j$ is true). By (i) and (ii), it follows that a random assignment $\boldsymbol{z}=(\boldsymbol{x},\boldsymbol{y})$ satisfies $Z_j$ (i.e., $Z_j$ is true) with probability $P_Z(j)$, defined earlier, since $X_j$ and $Y_j$ are statistically independent, i.e., $P_Z(j) = P_X(j)\cdot P_Y(j)$.  

The formula $\Phi_N$ is true so long as some clause $Z$ is true, and thus by the inclusion-exclusion principle, it immediately follows that the truth probability $\Pi_N$ of $\Phi_{N}$ is equal to $P$. Thus, in order to approximate $P$, it suffices to approximate $\Pi_N$. Since $\Phi_N$ is a DNF formula, $\Pi_N$ can be approximated using the KLM estimator. By Lemma~\ref{lem:KLME}, the output of the KLM estimator with inputs $(\Phi_N,P_Z;\xi,\delta/2)$, denoted by $\tilde{P}$, is a $(\xi,\delta/2)$-approximation of $\Pi_N$, or equivalently, $P$. This yields the following result. 

\begin{lemma}\label{lem:Ptilde}
$\tilde{P}$ is a $(\xi,\delta/2)$-approximation of $P$.
\end{lemma}

Moreover, the DNF formula $\Phi_{N,f}$ is true so long as $X_j$ is true for some $j$. Thus, the probability that $\Phi_{N,f}$ is true, denoted by $\Pi_{N,f}$, is equal to $P_f$. By a similar argument as above, the following result is immediate.

\begin{lemma}\label{lem:Pftilde}
$\tilde{P}_f$ is a $(\xi,\delta/2)$-approximation of $P_f$.
\end{lemma}

The results of Lemma~\ref{lem:Ptilde} and Lemma~\ref{lem:Pftilde} yield 
\begin{equation}\label{eq:TildePNoAlphaM1stBound}
\Pr\left\{|\tilde{P}-P|\geq \xi P\right\}\leq \frac{\delta}{2},
\end{equation} and
\begin{equation}\label{eq:TildePfNoAlphaM1stBound}
\Pr\left\{|\tilde{P}_f-P_f|\geq \xi P_f \right\}\leq \frac{\delta}{2}.
\end{equation} By combining~\eqref{eq:TildePNoAlphaM1stBound} and~\eqref{eq:TildePfNoAlphaM1stBound}, we get 
\begin{equation*}\label{eq:PPfNoAlphaBound}
\Pr\left\{|(\tilde{P}_f-\tilde{P})-\left(P_f-P\right)|\geq \xi (P_f+P) \right\}\leq \delta.
\end{equation*} Since $F_f=(P_f-P)\mu$ (by~\eqref{eq:FfPfPmmu}) and $\tilde{F}_f = (\tilde{P}_f-\tilde{P})\mu$ (by the algorithm), we get 
\begin{equation*}\label{eq:FfNoAlphaBound1}
\Pr\left\{|\tilde{F}_f-F_f|\geq \xi (P_f+P)\mu \right\}\leq \delta,
\end{equation*} or equivalently, 
\begin{equation}\label{eq:FfNoAlphaBound2}
\Pr\left\{|\tilde{F}_f-F_f|\geq \epsilon F_f \right\}\leq \delta,
\end{equation} so long as 
\begin{equation}\label{eq:EpsilonStar1}
\epsilon \geq \xi\left(\frac{P_f+P}{F_f}\right)\mu.
\end{equation} Thus, it suffices to show that our choice of $\xi$ meets the condition~\eqref{eq:EpsilonStar1}. To do so, we need to establish lower- and  upper bounds on $F_f$. This can be done by using the system-state based formulas for $P_f$ and $F_f$~\cite{MS:99}. 

Let $\boldsymbol{s}\triangleq\{s_1,\dots,s_m\}$ be the state of a system at a given time, where $s_i$ is ``true'' if the component $i$ is available at that time, and $s_i$ is ``false'' otherwise. Let $\mathcal{S}_f$ be the set of all states $\boldsymbol{s}$ in which the system is unavailable. Moreover, let $p(\boldsymbol{s})$ be the probability of the state $\boldsymbol{s}$, i.e., 
\begin{equation}\label{eq:psform}
p(\boldsymbol{s}) \triangleq \prod_{i\in \mathcal{I}(\boldsymbol{s})} p_i \cdot \prod_{i\in [m]\setminus \mathcal{I}(\boldsymbol{s})} (1-p_i),
\end{equation} where $\mathcal{I}(\boldsymbol{s})$ and $[m]\setminus \mathcal{I}(\boldsymbol{s})$ are the set of unavailable and available components, respectively, in the state $\boldsymbol{s}$. It was shown in~\cite{MS:99} that $P_f$ and $F_f$ can be written as: 
\begin{equation}\label{eq:PFMC}
P_f = \sum_{\boldsymbol{s}\in \mathcal{S}_f} p(\boldsymbol{s}),
\end{equation} and 
\begin{dmath}\label{eq:FFMC}
F_f = \sum_{\boldsymbol{s}\in \mathcal{S}_f} p(\boldsymbol{s})\left( \sum_{i\in \mathcal{I}(\boldsymbol{s})}\mu_i -\sum_{i\in [m]\setminus \mathcal{I}(\boldsymbol{s})}\lambda_i \right).
\end{dmath} 



Note that $|\mathcal{I}(\boldsymbol{s})|\geq s^{*}$ for any $\boldsymbol{s}\in \mathcal{S}_f$ since (i) all components of at least one cutset are unavailable, and (ii) the size of any cutset is bounded from below by $s^{*}$; and $|\mathcal{I}(\boldsymbol{s})|\leq m$ for all $\boldsymbol{s}$. Thus, 
\begin{equation}\label{eq:MuIs}
\mu_{\text{min}}s^{*}\leq \sum_{i\in \mathcal{I}(\boldsymbol{s})}\mu_i\leq \mu
\end{equation} and 
\begin{equation}\label{eq:LambdaIs} 
0\leq\sum_{i\in [m]\setminus \mathcal{I}(\boldsymbol{s})} \lambda_i\leq (m-s^{*})\lambda_{\text{max}}	
\end{equation} for all $\boldsymbol{s}\in \mathcal{S}_f$. By the choice of the algorithm, $\rho = \mu_{\text{min}}s^{*}-\lambda_{\text{max}} (m-s^{*})$ (Step~0.2). Note that $\rho = O(m)$ and $\rho=\Omega(1)$ since $1\leq s^{*}\leq m$, and $\mu_{\text{min}}=\Theta(1)$ and $\lambda_{\text{max}}=\Theta(1)$. Putting~\eqref{eq:MuIs} and~\eqref{eq:LambdaIs} together, it follows that $\sum_{i\in \mathcal{I}(\boldsymbol{s})}\mu_i-\sum_{i\in [m]\setminus \mathcal{I}(\boldsymbol{s})} \lambda_i$, for all $\boldsymbol{s}\in \mathcal{S}_f$, is lower and upper bounded by $\rho$ and $\mu$, respectively. By~\eqref{eq:PFMC} and~\eqref{eq:FFMC}, it is then easy to see that \begin{equation}\label{eq:PFFFPF}
P_f\rho\leq F_f\leq P_f \mu.
\end{equation}



Using the bounds on $F_f$ in~\eqref{eq:PFFFPF}, we proceed with the rest of the proof as follows. 

Since $F_f\geq P_f\rho$ (by~\eqref{eq:PFFFPF}), and $P_f+P< 2P_f$ (by the fact that $F_f>0$, and so, $P_f>P$ (by~\eqref{eq:FfPfPmmu})), one can see that
\begin{equation}\label{eq:TB1}
\xi\left(\frac{P_f+P}{F_f}\right)\mu< 2\xi\left(\frac{\mu}{\rho}\right).	
\end{equation} Taking $\xi=(\epsilon/2) (\rho/\mu)$ as in the algorithm, it is obvious that~\eqref{eq:EpsilonStar1}, and consequently,~\eqref{eq:FfNoAlphaBound2} hold (by~\eqref{eq:TB1}). Thus, $\tilde{F}_f$ is an $(\epsilon,\delta)$-approximation of $F_f$.\qed
\end{proof}

Note that the smaller is $\xi$, the larger is the running time of the proposed algorithm. However, the lower bound on $F_f$ and the upper bound on $P_f+P$ dictate the choice of $\xi$. Thus, the closer are these bounds to the actual values, the more efficient is the computation of the approximation. However, the bounds in~\eqref{eq:TB1} follow from a worst-case analysis. For improving on these bounds, the trick is to run the proposed algorithm multiple times, and amplify $\xi$ in each run based on the results of the previous runs. Similar idea was previously used in~\cite{KLM:89}. This is, however, beyond the scope of this paper, and hence not discussed here. 


\subsection{Computational Complexity}\label{subsec:CC1}
The initialization (Step~0) can be run in $O(N\log N)$ time. In Step~0.1, as was shown in~\cite{TSOA:80}, all the $N$ minimal cutsets can be enumerated in $O(N)$ time. In Step~0.2, $s^{*}$ can be computed in $O(N\log N)$ time (via sorting), $\mu_{\text{min}}$ and $\lambda_{\text{max}}$ in $O(m\log m)$ time (via sorting), and $\mu$ and $\rho$ in $O(m)$ time and $O(1)$ time, respectively.

By the choice of $\xi = (\epsilon/2)(\rho/\mu)$ in the algorithm, $\xi=\Omega(\epsilon/m)$ since $\rho = \Omega(1)$ and $\mu = O(m)$. In Steps~1 and~2, the estimates $\tilde{P}$ and $\tilde{P}_f$ can be computed in poly$(N)$ time by running the (self-adjusting) KLM estimator in $O((N m/\xi^2)\log(1/\delta))$ time~\cite{KLM:89}, or equivalently, $O((N m^3) (1/\epsilon^2)\log(1/\delta))$ time. Putting everything together, the proposed algorithm runs in poly$(N)$ time. This technique is useful for systems with poly$(n)$ cutsets. In such systems, $F_f$ can be approximated accurately and efficiently by the proposed algorithm; whereas the accuracy of approximating $F_f$ using the bounding technique may be inadequate (see Section~\ref{subsec:PreWork1}).

\section{A Poly$(n)$-Time Approximation Algorithm for All-Terminal Reliability Systems}\label{sec:App2}

\subsection{Background}\label{subsec:PreWork2}
In this section, we overview the Monte Carlo simulation for approximating $P_f$ and $F_f$ as part of the proposed algorithm. The reader familiar with this concept can skip this section. 


In many systems, there exist exponentially many cutsets and failure states, and it is not practical to identify and enumerate all cutsets or failure states. Thus $P_f$ and $F_f$ cannot be computed from~\eqref{eq:Pf} and~\eqref{eq:Ff} or~\eqref{eq:PFMC} and~\eqref{eq:FFMC}, directly. The Monte Carlo simulation (MCS) is useful to approximate $P_f$ and $F_f$ in such cases~\cite{MS:99}. This technique performs MCS over the system state space (see, for details,~\eqref{eq:psform}--\eqref{eq:FFMC}). 

The inputs of the MCS algorithm are the failure rates $\{\lambda_i\}$, the repair rates $\{\mu_i\}$, and the number of simulation runs, $S$ and $T$. The MCS algorithm proceeds in steps as follows (see Section~\ref{subsec:TA1} for the definitions of $\boldsymbol{s}$, $p(\boldsymbol{s})$, and $\mathcal{I}(\boldsymbol{s})$): 

\begin{itemize}
\item[0.] Initialize the counters $s$ and $t$ by setting $s=t=1$;
\item[1.] Choose a random state $\boldsymbol{s}$, with probability of selecting $\boldsymbol{s}$ being equal to $p(\boldsymbol{s})$;
\item[2.] If the system is unavailable in the state $\boldsymbol{s}$, let $\pi_{s,t} = 1$ and $\varphi_{s,t}=\sum_{i\in \mathcal{I}(\boldsymbol{s})}\mu_i-\sum_{i\in [m]\setminus \mathcal{I}(\boldsymbol{s})} \lambda_i$; otherwise, let $\pi_{s,t} =0$ and $\varphi_{s,t}=0$; 
\item[3.] $s\leftarrow s+1$  
\item[4.] Repeat Steps 1-3 $S$ times;
\item[5.] Compute the means $\pi_t=(\sum_{s=1}^{S} \pi_{s,t})/S$ and $\varphi_t=(\sum_{s=1}^{S} \varphi_{s,t})/S$; 
\item[6.] $t\leftarrow t+1$
\item[7.] Repeat Steps 1-6 $T$ times;
\item[8.] Return the median of $\{\pi_t\}_{t=1}^{T}$ and the median of $\{\varphi_t\}_{t=1}^{T}$. 
\end{itemize} 

The running time of MCS is $O(ST(m+n))$. In particular, Step 1 can be run in $O(m)$ time; Step 2 can be run in $O(m+n)$ time, e.g., using the breadth first search or the depth first search. Each of Steps~1 and~2 is run $ST$ times; Step~5 can be run in $O(S)$ time, and this step is run $T$ times; Step~6 can be run in $O(T)$ time, and this step is run only once. 

We will show in Section~\ref{subsec:TA2} that MCS can provide approximations of $P_f$ and $F_f$ with arbitrary additive error factors in poly$(n)$ time; whereas for approximating $P_f$ and $F_f$ within an arbitrary multiplicative error, MCS cannot be run in poly$(n)$ time.

\subsection{Main Ideas of the Proposed Algorithm}\label{subsec:MainIdeas2}
The proposed algorithm in Section~\ref{sec:App1} provides an approximation of $P_f$ and $F_f$ in poly$(N)$-time with provable guarantees for any $k$-terminal reliability system. However, the number of cutsets ($N$) generally grows exponentially with the number of nodes ($n$). Thus, a natural question that arises is whether one can design an algorithm for computing $(\epsilon,\delta)$-approximations of $P_f$ and $F_f$ which runs in poly$(n)$-time. Karger in~\cite{K:01} proposed the first (and only) poly$(n)$-time algorithm for approximating $P_f$ for all-terminal reliability systems (i.e., $k=n$). In this section, we present the first poly$(n)$-time algorithm for approximating $F_f$ for all-terminal reliability systems. 

As was previously shown in \cite{KS:96}, in all-terminal reliability systems, the number of minimum cutsets, i.e., those cutsets with minimum weight, is poly$(n)$. Moreover, there are a poly$(n)$ number of near-minimum cutsets with weight not greater than a given constant factor of the weight of the minimum cutsets (see Lemma~\ref{lem:AlphaMinCutsNumber}), and such cutsets can all be enumerated in poly$(n)$ time (see Lemma~\ref{lem:AlphaMinCutsetsEnum}). Note that, for a poly$(n)$ number of cutsets, there still exist an exponential number of terms in \eqref{eq:Pf} and \eqref{eq:Ff}, and consequently, using \eqref{eq:Pf} and \eqref{eq:Ff} one can only provide a series of upper- and lower-bounds on $P_f$ and $F_f$ via applying the bounding technique~\cite{S:77}. This, however, does not give poly$(n)$-time $(\epsilon,\delta)$-approximations of $P_f$ and $F_f$, for arbitrary $\epsilon>0$ and $0<\delta<1$. Using MCS, also, one may require exponentially many simulation runs to compute such approximations of $P_f$ and $F_f$ (see Section~\ref{subsec:TA2}). 

To tackle this problem, we use the ideas and algorithms from the previous section, with new bounds to achieve the desired approximation parameters. In particular, we propose a fast and accurate algorithm to approximate $F_f$, using the near-minimum cutsets of weight no greater than $\alpha\geq 1$ times the minimum cutset weight, for a proper choice of $\alpha$ upper bounded by $3+o(1)$ (i.e., using $N^{(\alpha)}\leq n^{2\alpha}=O(n^6)$ cutsets (by Lemma~\ref{lem:AlphaMinCutsNumber})). Note that for approximating $P_f$, the proper choice of $\alpha$, as was shown in~\cite{K:01}, is upper bounded by~$2$.  

\subsection{Proposed Algorithm}\label{subsec:PA2}
For an all-terminal reliability system, the inputs of the proposed algorithm are the failure rates $\{\lambda_i\}$, the repair rates $\{\mu_i\}$, and the approximation parameters $\epsilon>0$ and $0<\delta<1$. The algorithm proceeds in steps as follows:
\begin{itemize}
\item[0.] Initialization:
\begin{itemize}
\item[0.1] Find a min-cutset $\mathcal{C}^{*}$, and compute $w^{*}=w(\mathcal{C}^{*})$ and $p^{*}=p(\mathcal{C}^{*})$;
\item[0.2] Compute $\mu_{\text{min}}$, $\lambda_{\text{max}}$, $w_{\text{max}}$, $\mu$, $\mu^{*}=\sum_{i\in \mathcal{C}^{*}} \mu_i$, $s^{*} = \min\{\max\{w^{*}/w_{\text{max}},1\},m\}$, and $\rho=\mu_{\text{min}}s^{*}-\lambda_{\text{max}} (m-s^{*})$;
\end{itemize}
\item[1.] If $p^{*}>n^{-4}$: 
\begin{itemize}
\item[1.1] Compute $\tilde{F}^{\text{MC}}_f$ as the output of MCS for $S = {\lceil (\mu(2+\epsilon)\log 8)/ (p^{*}\rho\epsilon^2)\rceil}$ and $T = \lceil 12\log(1/\delta)\rceil$;
\item[1.2] Return $\tilde{F}^{\text{MC}}_f$.
\end{itemize} 
\item[2.] If $p^{*}\leq n^{-4}$:
\begin{itemize}
\item[2.1] Take $\xi = (\epsilon/2)(\rho/\mu)$
\item[2.2] Take $\gamma = (w^{*}/\log n) -2$, and $\alpha =1+(2/\gamma)+(\log((2(\gamma+2)(\mu-s^{*}\mu_{\text{min}}))/(\xi\gamma (\mu-\mu^{*}))))/(\gamma\log n)$
\item[2.3] Enumerate all $N^{(\alpha)}$ $\alpha$-min cutsets;
\item[2.4] Compute a $(\xi,\delta/2)$-approximation $\tilde{P}^{(\alpha)}$ of $P$ using the KLM estimator; 
\item[2.5] Compute a $(\xi,\delta/2)$-approximation $\tilde{P}^{(\alpha)}_f$ of $P_f$ using the KLM estimator; 
\item[2.6] Return $\tilde{F}_f = (\tilde{P}^{(\alpha)}_f-\tilde{P}^{(\alpha)})\mu$.\end{itemize} 
\end{itemize}

The running time of this algorithm is $O(n^{4}m(m+n)(1/\epsilon^2)\log(1/\delta))$ and $O(N^{(\alpha)} m^3(1/\epsilon^2)\log(1/\delta))$ for $p^{*}>n^{-4}$ and $p^{*}\leq n^{-4}$, respectively. (The running time of each step of the algorithm is given in Section~\ref{subsec:CC2}.)

Theoretically, $N^{(\alpha)} = O(n^{6})$ for our choice of $\alpha$ (see Lemma~\ref{lem:AlphaMinCutsNumber}). This result follows from a worst-case analysis. However, for many practical systems, e.g., Internet2 network (see Section~\ref{sec:SR}), $N^{(\alpha)}$ is much smaller, e.g., $O(n^2)$. The threshold $n^{-4}$ for $p^{*}$ in the algorithm is chosen to have a matching running time $O(n^8)$ in both cases of $p^{*}$ for systems with $N^{(\alpha)} = O(n^2)$. Nevertheless, for any choice of the threshold, the running time of the proposed algorithm is significantly less than that of MCS, for sufficiently small $p^{*}$ (depending on the threshold's choice). 

The details of the enumeration of $N^{(\alpha)}$ $\alpha$-min cutsets (Step 2.3) and the computations of $\tilde{P}^{(\alpha)}_f$ and $\tilde{P}^{(\alpha)}$ (Steps 2.4 and 2.5) are as follows. 

\subsubsection{Enumeration of $\alpha$-min Cutsets}\label{subsubsec:AlphaMinCutsets} We enumerate $\alpha$-min cutsets by using a randomized algorithm, referred to as the  \emph{recursive generalized contraction (RGC) algorithm}, due to Karger and Stein~\cite{KS:96}. (The non-recursive and recursive original contraction algorithms output min-cutsets~\cite{K:93}.) For simplicity, we explain the non-recursive version of this algorithm.

The (non-recursive) generalized contraction algorithm proceeds in rounds. In each round, one component, say $i$, is randomly chosen with probability of choosing component $i$ equal to $w_i/w$, and the two end-nodes of the component $i$ are merged, while maintaining all components from either of these nodes to other nodes. The algorithm continues this process until more than $\lceil 2\alpha\rceil$ nodes remain, and terminates otherwise. Once terminated, the algorithm selects a cutset in the resulting (multi-) system at random, and returns this cutset. 

Run the RGC algorithm for $n^{2\alpha}\log n^{2\alpha+c}$ times for an arbitrary $c>0$ (in $O(n^{2\alpha}\log^2 n)$ time), and denote by $C^{(\alpha)}$ the set of all $N^{(\alpha)}$ (distinct) output cutsets.

\subsubsection{Computation of $\tilde{P}^{(\alpha)}$}\label{subsubsec:TildeP2} Define $Z_i$ and $P_Z(i)$ as in Section~\ref{subsubsec:TildeP1}, except using only the $N^{(\alpha)}$ cutsets in $C^{(\alpha)}$, instead of all the $N$ cutsets in $C$; and construct a DNF formula \[\Phi_{N^{(\alpha)}}=Z_1\vee \dots \vee Z_{N^{(\alpha)}}.\] Run the KLM estimator with inputs $(\Phi_{N^{(\alpha)}},P_Z;\xi,\delta/2)$, and denote by $\tilde{P}^{(\alpha)}$ the output.

\subsubsection{Computation of $\tilde{P}_f^{(\alpha)}$}\label{subsubsec:TildePf2}

Similar to computing $\tilde{P}^{(\alpha)}$, in order to compute $\tilde{P}^{(\alpha)}_f$, construct a DNF formula \[\Phi_{N^{(\alpha)},f}=X_1\vee\dots\vee X_{N^{(\alpha)}},\] where $X_i$ is defined as in Section~\ref{subsubsec:TildePf1}, except only for $i\in [N^{(\alpha)}]$, and not all $i\in [N]$. Define $P_X(i)$ accordingly as before. Run the KLM estimator with inputs $(\Phi_{N^{(\alpha)},f},P_X;\xi,\delta/2)$, and denote by $\tilde{P}^{(\alpha)}_f$ the output.

\subsection{Theoretical Analysis}\label{subsec:TA2}

\begin{theorem}\label{thm:Main2}
The output of the algorithm in Section~\ref{subsec:PA2}, $\tilde{F}^{\text{MC}}_f$ or $\tilde{F}_f$, is an $(\epsilon,\delta)$-approximation of $F_f$. 
\end{theorem}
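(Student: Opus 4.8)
The plan is to prove the two branches of the algorithm separately, since the statement claims that whichever quantity is returned — $\tilde{F}^{\text{MC}}_f$ when $p^{*}>n^{-4}$, or $\tilde{F}_f$ when $p^{*}\leq n^{-4}$ — is an $(\epsilon,\delta)$-approximation of $F_f$. In both branches the common anchor is the two-sided bound $P_f\rho\leq F_f\leq P_f\mu$ from~\eqref{eq:PFFFPF}, together with $P_f\geq p^{*}$ (since the min-cutset alone is all-down with probability $p^{*}$); these turn the a priori unknown $F_f$ into something controllable by $p^{*}$, $\rho$, and $\mu$, and hence let the prescribed sample/accuracy parameters be calibrated.

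For the branch $p^{*}>n^{-4}$ (Monte Carlo), I would first read off from the system-state formula~\eqref{eq:FFMC} that the per-sample quantity $\varphi_{s,t}$ is \emph{unbiased}, i.e. its expectation is $F_f$. The crucial observation is that $\varphi_{s,t}$ vanishes on operating states and, by~\eqref{eq:MuIs}--\eqref{eq:LambdaIs}, lies in the interval $[\rho,\mu]$ on failure states, where $\rho>0$ is guaranteed by the standing assumption $\mu_{\text{min}}/\lambda_{\text{max}}>m-1$. Thus $0\leq\varphi_{s,t}\leq\mu$, so $\varphi_{s,t}^{2}\leq\mu\varphi_{s,t}$ and $\mathrm{Var}(\varphi_{s,t})\leq\mathbb{E}[\varphi_{s,t}^{2}]\leq\mu F_f$. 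Chebyshev's inequality applied to the mean $\varphi_t$ of $S$ i.i.d. samples gives $\Pr\{|\varphi_t-F_f|\geq\epsilon F_f\}\leq\mu/(S\epsilon^{2}F_f)$; substituting $F_f\geq p^{*}\rho$ and the prescribed $S=\lceil\mu(2+\epsilon)\log 8/(p^{*}\rho\epsilon^{2})\rceil$ collapses this to $1/((2+\epsilon)\log 8)<1/4$. Finally, the standard median-of-$T$ amplification with $T=\lceil 12\log(1/\delta)\rceil$ (a Chernoff bound on the number of ``good'' runs, exactly as in the KLM analysis) boosts the confidence to $1-\delta$, which settles this branch.

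For the branch $p^{*}\leq n^{-4}$ (KLM on near-minimum cutsets) I would reuse the machinery of Theorem~\ref{thm:Main1} restricted to the $\alpha$-min cutsets $C^{(\alpha)}$. The RGC step returns all $N^{(\alpha)}=O(n^{2\alpha})$ $\alpha$-min cutsets with high probability (Lemmas~\ref{lem:AlphaMinCutsetsEnum} and~\ref{lem:AlphaMinCutsNumber}); conditioned on this, the same DNF argument as in Lemmas~\ref{lem:Ptilde}--\ref{lem:Pftilde} shows $\tilde{P}^{(\alpha)}$ and $\tilde{P}^{(\alpha)}_f$ are $(\xi,\delta/2)$-approximations of $P^{(\alpha)}$ and $P^{(\alpha)}_f$, so a union bound gives $|\tilde{P}^{(\alpha)}_f-P^{(\alpha)}_f|\leq\xi P^{(\alpha)}_f$ and $|\tilde{P}^{(\alpha)}-P^{(\alpha)}|\leq\xi P^{(\alpha)}$ simultaneously with probability $\geq 1-\delta$. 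The new, deterministic ingredient is the \emph{truncation error}: since $P_f-P^{(\alpha)}_f$ and $P-P^{(\alpha)}$ are each at most the tail mass $T_\alpha=\sum_{\mathcal{C}:\,w(\mathcal{C})>\alpha w^{*}}p(\mathcal{C})$, I would estimate $T_\alpha$ by Karger's cut-counting bound (the number of cutsets of weight at most $\theta w^{*}$ is at most $n^{2\theta}$). Writing $w^{*}=(\gamma+2)\log n$ with $\gamma=w^{*}/\log n-2\geq 2$ — the inequality $\gamma\geq 2$ being exactly what $p^{*}\leq n^{-4}$, i.e. $w^{*}\geq 4\log n$, buys us — an integration-by-parts (Riemann--Stieltjes) estimate yields the geometric tail $T_\alpha\leq\frac{\gamma+2}{\gamma}n^{-\alpha\gamma}$, and the prescribed $\alpha=1+2/\gamma+\log\!\big(2(\gamma+2)(\mu-s^{*}\mu_{\text{min}})/(\xi\gamma(\mu-\mu^{*}))\big)/(\gamma\log n)$ is precisely the value that reduces this to $T_\alpha\leq \xi p^{*}(\mu-\mu^{*})/\big(2(\mu-s^{*}\mu_{\text{min}})\big)$.

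To finish, I would assemble $\tilde{F}_f=(\tilde{P}^{(\alpha)}_f-\tilde{P}^{(\alpha)})\mu$ via a triangle inequality that cleanly separates the KLM error from the truncation error, giving $|\tilde{F}_f-F_f|\leq\mu\big[\xi(P_f+P)+T_\alpha\big]$, and then verify $\xi(P_f+P)+T_\alpha\leq\epsilon(P_f-P)$ using $F_f=(P_f-P)\mu$ (from~\eqref{eq:FfPfPmmu}), $F_f\geq P_f\rho$, and the lower bound $P\geq p^{*}(\mu-\mu^{*})/\mu$ coming from the min-cutset term of the union probability $P$. I expect this last verification to be the \emph{main obstacle}: under the worst case $F_f=\rho P_f$ the KLM term $\xi(P_f+P)$ almost exhausts the budget $\epsilon(P_f-P)$, so the argument must extract the exact slack left by the strict inequality $P_f+P<2P_f$ and demonstrate that the engineered $\alpha$ makes $T_\alpha$ fit inside precisely that slack — which is exactly why the ratio $(\mu-s^{*}\mu_{\text{min}})/(\mu-\mu^{*})$ is placed inside the logarithm defining $\alpha$. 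The enumeration failure probability of RGC must also be folded into the overall $\delta$ by choosing the RGC repetition parameter large enough. Once the tail estimate and the slack are matched, the combination of the KLM guarantee and the truncation bound yields the $(\epsilon,\delta)$-approximation, but aligning these constants is the delicate step.
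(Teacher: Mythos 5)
Your first branch ($p^{*}>n^{-4}$) is correct and essentially parallels the paper: you obtain the per-run bound $\Pr\{|\varphi_t-F_f|\geq\epsilon F_f\}\leq 1/4$ by Chebyshev with the variance bound $\mathrm{Var}(\varphi_{s,t})\leq\mu F_f$, whereas the paper normalizes $\hat{\varphi}_{s,t}=\varphi_{s,t}/\mu$ and applies the Chernoff bound of Lemma~\ref{lem:CH}; both routes work with the prescribed $S$, and the median-of-$T$ amplification is the same. (Your remark that the RGC failure probability should be folded into $\delta$ is fair; the paper leaves it implicit.)

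The second branch has a genuine gap in the error bookkeeping, in exactly the place you flag as the ``main obstacle.'' Two problems. First, you bound the truncation errors $P-P^{(\alpha)}$ and $P_f-P^{(\alpha)}_f$ by the unavailability-only tail mass $T_\alpha=\sum_{w(\mathcal{C})>\alpha w^{*}}p(\mathcal{C})$, discarding the unexposed factor. But the prescribed $\alpha$ is calibrated for the unavailable-\emph{and}-unexposed tail: the factor $1-s^{*}\mu_{\text{min}}/\mu=(\mu-s^{*}\mu_{\text{min}})/\mu$ is precisely what cancels the denominator $\mu-s^{*}\mu_{\text{min}}$ that your own computation leaves in $T_\alpha\leq\xi p^{*}(\mu-\mu^{*})/\bigl(2(\mu-s^{*}\mu_{\text{min}})\bigr)$, producing $(\xi/2)\,p^{*}(1-\mu^{*}/\mu)\leq(\xi/2)P$ via~\eqref{eq:Ppstarfactor}. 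Without it, your bound on $P-P^{(\alpha)}$ exceeds $(\xi/2)P$ by the factor $\mu/(\mu-s^{*}\mu_{\text{min}})$. Second, and more seriously, the plan of fitting the truncation into the global slack $\epsilon(P_f-P)-\xi(P_f+P)=\xi(P_f-P)$ does not close: that slack equals $\xi F_f/\mu\geq\xi\rho p^{*}/\mu$, and already in the homogeneous case $\mu_i\equiv 1$ your $2T_\alpha$ can exceed it by a factor of order $m/s^{*}$. The paper's resolution is structurally different: it never measures the truncation against $F_f$. It establishes the one-sided multiplicative bounds $(1-\xi/2)P\leq P^{(\alpha)}\leq P$ and $(1-\xi/2)P_f\leq P^{(\alpha)}_f\leq P_f$, runs KLM at accuracy $\xi/2$ on $P^{(\alpha)}$ and $P^{(\alpha)}_f$, composes the two multiplicative errors so that $\tilde{P}^{(\alpha)}$ and $\tilde{P}^{(\alpha)}_f$ are $(\xi,\delta/2)$-approximations of $P$ and $P_f$ \emph{themselves} (Lemmas~\ref{lem:Palphatilde} and~\ref{lem:Palphaftilde}), and then the final verification $\xi(P_f+P)\mu/F_f<2\xi\mu/\rho=\epsilon$ is word-for-word that of Theorem~\ref{thm:Main1}. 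If you adopt this $\xi/2+\xi/2$ split per quantity and retain the unexposed factor when bounding $P-P^{(\alpha)}$, your argument closes; as written it does not.
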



\begin{proof}
For the case of $p^{*}>n^{-4}$, the algorithm resorts to the MCS algorithm. We show that $\tilde{F}^{\text{MC}}_f$ is an $(\epsilon,\delta)$-approximation of $F_f$. To this end, we use a version of the Chernoff's bound as follows. (The proof of this result can be found in the appendix.)

\begin{lemma}\label{lem:CH}
Let $I_1,\dots,I_T$ be $T$ independent and identically distributed random variables such that $0\leq I_t\leq 1$ for all $t$. Let $I\triangleq \sum_{t=1}^{T} I_t$. Then, for any $\epsilon>0$, 
\begin{equation}
\Pr\{I\geq (1+\epsilon)\mathbb{E}(I)\}\leq \exp\left(-\left(\frac{\epsilon^2}{2+\epsilon}\right)\mathbb{E}(I)\right),  	
\end{equation} and 
\begin{equation}
\Pr\{I\leq (1-\epsilon)\mathbb{E}(I)\}\leq \exp\left(-\left(\frac{\epsilon^2}{2}\right)\mathbb{E}(I)\right).  	
\end{equation}
\end{lemma}

By the definition of $\varphi_{s,t}$ in Step~2 of MCS, it is easy to see that $0\leq \rho \leq \varphi_{s,t}\leq \mu$ for all $s,t$ (by~\eqref{eq:MuIs} and~\eqref{eq:LambdaIs}). Let $\hat{\varphi}_{s,t}\triangleq \varphi_{s,t}/\mu$. Note that $\hat{\varphi}_{s,t}$ are independent and identically distributed random variables such that $0\leq \hat{\varphi}_{s,t}\leq 1$ for all $s,t$. Recall that $\varphi_t = (\sum_{s=1}^{S} \varphi_{s,t})/S$ (Step~5 of MCS). Let $\hat{\varphi}_t \triangleq \varphi_t/\mu$. By the result of Lemma~\ref{lem:CH}, for any $\epsilon>0$, we get $\Pr\{|\hat{\varphi}_t-\mathbb{E}(\hat{\varphi}_t)|\geq \epsilon \mathbb{E}(\hat{\varphi}_t)\}\leq 2\exp(-(\epsilon^2/(2+\epsilon))S\mathbb{E}(\hat{\varphi}_t))$ for all $t$. Note that $\mathbb{E}(\varphi_t)=F_f$ (by~\eqref{eq:PFMC} and~\eqref{eq:FFMC}), and accordingly, $\mathbb{E}(\hat{\varphi}_t) = F_f/\mu$. Moreover, $F_f\geq P_f\rho$ (by~\eqref{eq:PFFFPF}) and $P_f\geq p^{*}$. (The probability that all components in a cutset are unavailable, $P_f$, is lower bounded by the probability that all components in a given min-cutset are unavailable, $p^{*}$, i.e., $P_f\geq p^{*}$.) Thus, for all $t$,
\begin{equation*}\label{eq:Prphi1}
\Pr\left\{|\varphi_t-F_f|\geq \epsilon F_f\right\}\leq 2\exp\left(-\left(\frac{\epsilon^2}{2+\epsilon}\right) \left(\frac{\rho}{\mu}\right) S p^{*}\right)	,
\end{equation*} or equivalently, 
\begin{equation*}
\Pr\left\{|\varphi_t-F_f|\geq \epsilon F_f\right\}\leq \frac{1}{4},	
\end{equation*} for the choice of $S$ in the algorithm (Step~1.1). Since $\tilde{F}^{\text{MC}}_f$, the output of MCS, is the median of $\{\varphi_t\}_{t=1}^{T}$, by applying the result of Lemma~\ref{lem:CH}, it can be shown that 
\begin{align}\label{eq:FfMCE}
\Pr\left\{|\tilde{F}_f^{\text{MC}}-F_f|\geq \epsilon F_f\right\}& \leq \delta	
\end{align} for the choice of $T$ in the algorithm (Step~1.1). (Similar technique is used in the appendix as part of the proof of Lemma~\ref{lem:KLME}.) Thus, $\tilde{F}^{\text{MC}}_f$ is an $(\epsilon,\delta)$-approximation of $F_f$ (by~\eqref{eq:FfMCE}). (Similarly, MCS can compute such an approximation of $P_f$ in poly$(n)$ time.)
 






Now, consider the case of $p^{*}\leq n^{-4}$. Note that, in this case, $P_f$ can be arbitrarily small (in $n$ and $m$), and $S$ can be arbitrarily large. Thus, MCS cannot compute an $(\epsilon,\delta)$-approximation of $F_f$ in poly$(n)$ time. (A similar negative result holds for computing such an approximation of $P_f$.)

Fix an arbitrary $\alpha\geq 1$. The following results hold for $\alpha$-min cutsets. (The proofs are given in the appendix.) 

\begin{lemma}\label{lem:AlphaMinCutsNumber}
The number of $\alpha$-min cutsets, $N^{(\alpha)}$, is bounded from above by $n^{2\alpha}$.
\end{lemma} 

\begin{lemma}\label{lem:AlphaMinCutsetsEnum}
Running the RGC algorithm $n^{2\alpha} \log n^{2\alpha+c}$ times, for any $c>0$, one can enumerate all $\alpha$-min cutsets in $O(n^{2\alpha}\log^2 n)$ time, with probability at least $1-n^{-c}$.	
\end{lemma}

By the choice of the algorithm (Step~2.2), 
\begin{equation}\label{eq:gamma}
\gamma=\frac{w^{*}}{\log n}-2,	
\end{equation} i.e., $p^{*}=n^{-2-\gamma}$. (Since $p^{*}\leq n^{-4}$, it holds that $\gamma\geq 2$.) The probability that all components in a given cutset are unavailable is upper bounded by $p^{*}$, and the probability that they are all unexposed is upper bounded by $1-s^{*}(\mu_{\text{min}}/\mu)$. (Note that $p(\mathcal{C})\leq p^{*}$ and $\left|\mathcal{C}\right|\geq s^{*}$ for all cutsets $\mathcal{C}$.) Thus, the probability that all components in a given $\alpha$-min cutset are unavailable and unexposed is upper bounded by $p^{*}(1-s^{*}(\mu_{\text{min}}/\mu))$, or equivalently, $n^{-2-\gamma}(1-s^{*}(\mu_{\text{min}}/\mu))$. 

By Lemma~\ref{lem:AlphaMinCutsNumber}, there are at most $n^{2\alpha}$ $\alpha$-min cutsets. By applying union bound, the probability that all components in some $\alpha$-min cutset are unavailable and unexposed is upper bounded by $n^{-2-\gamma+2\alpha}(1-s^{*}(\mu_{\text{min}}/\mu))$. This result is generalizable for all cutsets of weight greater than $\alpha w^{*}$ as follows. (A related, yet weaker, result was shown in~\cite[Theorem 2.9]{K:01}.) 


Let $\{\mathcal{C}_{1},\dots,\mathcal{C}_{N-N^{(\alpha)}}\}$ be the set of all cutsets of weight greater than $\alpha w^{*}$. Assume that $w(\mathcal{C}_1)\leq \dots \leq w(\mathcal{C}_{N-N^{(\alpha)}})$. Let $M \triangleq \min\{n^{2\alpha}, N-N^{(\alpha)}\}$. First, consider the cutsets $\mathcal{C}_1,\dots,\mathcal{C}_M$. For any $1\leq j\leq M$, the probability that all components in $\mathcal{C}_j$ are unavailable and unexposed, $\exp(-w(\mathcal{C}_j)) (1-\sum_{i\in \mathcal{C}_j}\mu_i/\mu)$, is upper bounded by $(p^{*})^{\alpha}(1-s^{*}(\mu_{\text{min}}/\mu))$. By applying union bound, the probability that all components in $\mathcal{C}_j$ for some $1\leq j\leq n^{2\alpha}$ are unavailable and unexposed is upper bounded by 
\begin{equation}\label{eq:Fact1}
n^{2\alpha} (p^{*})^{\alpha} \left(1- \frac{s^{*}\mu_{\text{min}}}{\mu}\right)=n^{-\alpha\gamma}\left(1- \frac{s^{*}\mu_{\text{min}}}{\mu}\right).
\end{equation}

Next, consider the remainder of the cutsets $\mathcal{C}_{M+1},\dots,\mathcal{C}_{N-N^{(\alpha)}}$ (if any). For any $\beta>0$, the number of cutsets of weight less than or equal to $\beta w^{*}$ is upper bounded by $n^{2\beta}$ (by Lemma~\ref{lem:AlphaMinCutsNumber}). Thus, $w(\mathcal{C}_{n^{2\beta}})\geq \beta w^{*}$. This gives $w(\mathcal{C}_j)\geq (w^{*} \log j)/(2\log n)$ for all $j$, and subsequently, the probability that all components in cutset $\mathcal{C}_j$ are unavailable and unexposed is upper bounded by $(p^{*})^{(\log j)/(2\log n)}(1-s^{*}(\mu_{\text{min}}/\mu))=j^{-1-\gamma/2}(1-s^{*}(\mu_{\text{min}}/\mu))$. Again by a union-bound analysis, the probability that all components in cutset $\mathcal{C}_j$ for some $j>n^{2\alpha}$ are unavailable and unexposed is upper bounded by 
\begin{dmath}\label{eq:Fact2}
\sum_{j>n^{2\alpha}} j^{-1-\gamma/2}\left(\hspace{-0.125em}1- \frac{s^{*}\mu_{\text{min}}}{\mu}\hspace{-0.125em}\right)\hspace{-0.125em}\leq \frac{2}{\gamma} n^{-\alpha\gamma}\hspace{-0.125em}\left(\hspace{-0.125em}1- \frac{s^{*}\mu_{\text{min}}}{\mu}\hspace{-0.125em}\right).	
\end{dmath}

Putting~\eqref{eq:Fact1} and~\eqref{eq:Fact2} together, the probability that all components in some cutset $\mathcal{C}_j$, $j\in [N-N^{(\alpha)}]$, are unavailable and unexposed is upper bounded by 
\begin{dmath*}
n^{-\alpha\gamma}\left(1-\frac{s^{*}\mu_{\text{min}}}{\mu}\right)+ \frac{2}{\gamma} n^{-\alpha\gamma}\left(1-\frac{s^{*}\mu_{\text{min}}}{\mu}\right) = n^{-\alpha\gamma}\left(1+\frac{2}{\gamma}\right)\left(1-\frac{s^{*}\mu_{\text{min}}}{\mu}\right).
\end{dmath*} This immediately yields the following result.  

\begin{lemma}\label{lem:StrongCutsets}
The probability that all components in some cutset of weight greater than $\alpha w^{*}$ are unavailable and unexposed is bounded from above by 
\begin{equation}\label{eq:weightgreaterprob}
n^{-\alpha\gamma}\left(1+\frac{2}{\gamma}\right)\left(1-\frac{s^{*}\mu_{\text{min}}}{\mu}\right).	
\end{equation}
\end{lemma}



The probability that all components in some cutset are unavailable and unexposed, $P$, is lower bounded by the probability that all components in the min-cutset $\mathcal{C}^{*}$, found in Step~0.1, are unavailable and unexposed, $p^{*}(1-\mu^{*}/\mu)$. Note that this argument holds for any cutset $\mathcal{C}$, but the lower bound needs to be replaced with $p(\mathcal{C})(1-\sum_{i\in \mathcal{C}}\mu_i/\mu)$. Thus, 
\begin{equation}\label{eq:Ppstarfactor}
P\geq p^{*}\left(1-\frac{\mu^{*}}{\mu}\right).	
\end{equation} (Note that the bound in~\eqref{eq:Ppstarfactor} can be improved as follows. Enumerate all $N^{(1)}\leq n^2$ min-cutsets, e.g., by using the RGC algorithm in $O(n^2\log^2 n)$ time, and select a min-cut $\mathcal{C}$ with minimum $\sum_{i\in \mathcal{C}} \mu_i$, e.g., via sorting in $O(n^2\log n)$ time.) 

By combining~\eqref{eq:weightgreaterprob} and~\eqref{eq:Ppstarfactor}, it is easy to see that 
\begin{equation}\label{eq:Pbigeq}
n^{-\alpha\gamma}\left(1+\frac{2}{\gamma}\right)\left(1-\frac{s^{*}\mu_{\text{min}}}{\mu}\right)\leq \frac{\xi}{2} P
\end{equation} for any $\xi>0$ so long as 
\begin{equation}\label{eq:AlphaBound}
\alpha\geq \frac{1}{\gamma}\left(\gamma+2+\frac{\log\left(\frac{2}{\xi}\left(\frac{\gamma+2}{\gamma}\right)\left(\frac{\mu-s^{*}\mu_{\text{min}}}{\mu - \mu^{*}}\right)\right)}{\log n}\right).
\end{equation} By the choice of $\alpha$ in the algorithm (Step~2.2), it follows that~\eqref{eq:AlphaBound}, and consequently,~\eqref{eq:Pbigeq} hold. 


Let $P^{(\alpha)}$ be the probability that all components in some $\alpha$-min cutset are unavailable and unexposed. Then, 
\begin{equation}\label{eq:PMBound}
\left(1-\frac{\xi}{2}\right)P \stackrel{(a)}{\leq} P^{(\alpha)}\stackrel{(b)}{\leq} P,
\end{equation} where $(a)$ follows from \eqref{eq:Pbigeq}, and $(b)$ follows from the definitions of $P^{(\alpha)}$ and $P$. Thus, $P^{(\alpha)}$ is a $(\xi/2,0)$-approximation of $P$. Note that $P^{(\alpha)}$ corresponds to the $\alpha$-min cutsets, and there are poly$(n)$ such cutsets (Lemma~\ref{lem:AlphaMinCutsNumber}), and they can be enumerated in poly$(n)$ time (Lemma~\ref{lem:AlphaMinCutsetsEnum}). However, one cannot compute $P^{(\alpha)}$ in poly$(n)$ time via the inclusion-exclusion formula due to the exponential number of terms. We, instead, approximate $P^{(\alpha)}$ in poly$(n)$ time using the KLM estimator. 

The formula $\Phi_{N^{(\alpha)}}$ is true so long as some clause $Z_i$ is true. Thus, the truth probability of $\Phi_{N^{(\alpha)}}$ is equal to $P^{(\alpha)}$. Since $\Phi_{N^{(\alpha)}}$ is a DNF formula, the output of the KLM estimator with inputs $(\Phi_{N^{(\alpha)}},P_Z;\xi/2,\delta/2)$, denoted by $\tilde{P}^{(\alpha)}$, is a $(\xi/2,\delta/2)$-approximation of $P^{(\alpha)}$, i.e., 
\begin{equation}\label{eq:TildePAlphaM1stBound}
\Pr\left\{|\tilde{P}^{(\alpha)}-P^{(\alpha)}|\geq \frac{\xi}{2} P^{(\alpha)}\right\}\leq \frac{\delta}{2}.
\end{equation} By using~\eqref{eq:PMBound} and~\eqref{eq:TildePAlphaM1stBound}, we get  
\begin{equation}\label{eq:TildePM2ndBound}
\Pr\left\{|\tilde{P}^{(\alpha)}-P|\geq \xi P \right\}\leq \frac{\delta}{2}.
\end{equation} This immediately yields the following result.

\begin{lemma}\label{lem:Palphatilde}
$\tilde{P}^{(\alpha)}$ is a $(\xi,{\delta}/{2})$-approximation of $P$.
\end{lemma}

Let $P_f^{(\alpha)}$ be the probability that all components in some $\alpha$-min cutset are unavailable. Similar to~\eqref{eq:PMBound}, as was shown in~\cite{K:01}, it follows that
\begin{equation}\label{eq:PfMBound}
\left(1-\frac{\xi}{2}\right)P_f \leq P_f^{(\alpha)}\leq P_f.
\end{equation} 

The probability that the formula $\Phi_{N^{(\alpha)},f}$ is true is equal to $P^{(\alpha)}_f$. Thus, the output of the KLM estimator with inputs $(\Phi_{N^{(\alpha)},f},P_X;\xi/2,\delta/2)$, denoted by $\tilde{P}_f^{(\alpha)}$, is a $({\xi}/{2},{\delta}/{2})$-approximation of $P^{(\alpha)}_f$. Thus, 
\begin{equation}\label{eq:TildePfM1stBound}
\Pr\left\{|\tilde{P}_f^{(\alpha)}-P_f^{(\alpha)}|\geq \frac{\xi}{2} P_f^{(\alpha)} \right\}\leq \frac{\delta}{2}.
\end{equation} By using~\eqref{eq:PfMBound} and~\eqref{eq:TildePfM1stBound}, we get
\begin{equation}\label{eq:TildePfM2ndBound}
\Pr\left\{|\tilde{P}_f^{(\alpha)}-P_f|\geq \xi P_f \right\}\leq \frac{\delta}{2}.
\end{equation} Then, the following result is immediate. 

\begin{lemma}\label{lem:Palphaftilde}
$\tilde{P}_f^{(\alpha)}$ is a $(\xi,{\delta}/{2})$-approximation of $P_f$.
\end{lemma}

Putting together~\eqref{eq:TildePM2ndBound} and~\eqref{eq:TildePfM2ndBound}, it follows that
\begin{equation*}\label{eq:PPfBound}
\Pr\left\{\left|(\tilde{P}_f^{(\alpha)}-\tilde{P}^{(\alpha)})-\left(P_f-P\right)\right|\geq \xi (P_f+P) \right\}\leq \delta.
\end{equation*} The rest of the proof is the same as that in the proof of Theorem~\ref{thm:Main1} (and hence omitted), except that $\tilde{P}_f$ and $\tilde{P}$ are replaced with $\tilde{P}^{(\alpha)}_f$ and $\tilde{P}^{(\alpha)}$.\qed
\end{proof}

Similarly as in~\eqref{eq:FfMCE}, it can be shown that
\begin{align}\label{eq:AE}
\Pr\left\{|\tilde{F}_f^{\text{MC}}-F_f|\geq \epsilon\right\}& \leq \delta	
\end{align} so long as 
\begin{equation}\label{eq:TAE}
S\geq \mu\left(\frac{2F_f+\epsilon}{\epsilon^2}\right)\log 8
\end{equation} and 
\begin{equation}\label{eq:TAE2}
T\geq 12\log(1/\delta). 	
\end{equation}
Taking $S = \lceil (\mu(2\mu+\epsilon)\log 8)/\epsilon^2\rceil$ (noting $F_f\leq P_f\mu\leq \mu$ (by~\eqref{eq:PFFFPF})) and $T = \lceil12\log(1/\delta)\rceil$, MCS can approximate $F_f$, with an additive error of at most $\epsilon$ and an error probability of at most $\delta$, in poly$(n)$ time (by~\eqref{eq:AE}-\eqref{eq:TAE2}). (Since $\mu=O(m)=O(n^2)$, then $ST=O((n^4/\epsilon^2+n^2/\epsilon)\log(1/\delta))$.) Similar result holds for approximating $P_f$ using MCS. Note, however, that MCS cannot compute approximations of $P_f$ and $F_f$ within a multiplicative error in poly$(n)$ time. This suggests that approximating $P_f$ and $F_f$ within a multiplicative error is more challenging than that within an additive error. 

\begin{figure}
\centering
\includegraphics[width=0.49\textwidth]{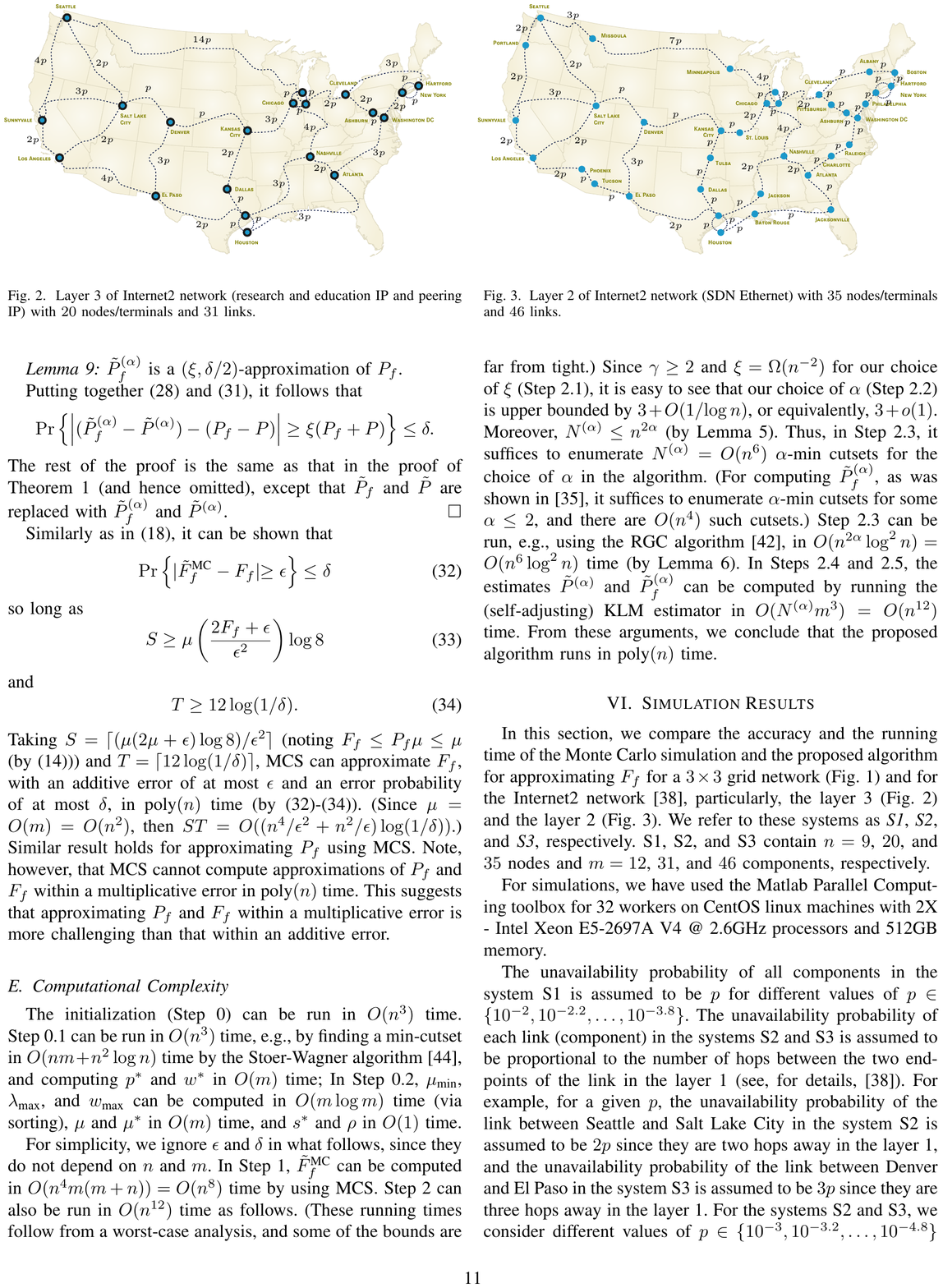}\vspace{0.25cm}
\caption{Layer 3 of Internet2 network (research and education IP and peering IP) with $20$ nodes/terminals and $31$ links.}\label{fig:System2}
\end{figure}

\subsection{Computational Complexity}\label{subsec:CC2}
The initialization (Step~0) can be run in $O(n^3)$ time. Step~0.1 can be run in $O(n^3)$ time, e.g., by finding a min-cutset in $O(nm+n^2\log n)$ time by the Stoer-Wagner algorithm~\cite{SW:94}, and computing $p^{*}$ and $w^{*}$ in $O(m)$ time; In Step~0.2, $\mu_{\text{min}}$, $\lambda_{\text{max}}$, and $w_{\text{max}}$ can be computed in $O(m\log m)$ time (via sorting), $\mu$ and $\mu^{*}$ in $O(m)$ time, and $s^{*}$ and $\rho$ in $O(1)$ time. 


For simplicity, we ignore $\epsilon$ and $\delta$ in what follows, since they do not depend on $n$ and $m$. In Step~1, $\tilde{F}_f^{\text{MC}}$ can be computed in $O(n^{4}m(m+n))=O(n^8)$ time by using MCS. Step~2 can also be run in $O(n^{12})$ time as follows. (These running times follow from a worst-case analysis, and some of the bounds are far from tight.) Since $\gamma\geq 2$ and $\xi=\Omega(n^{-2})$ for our choice of $\xi$ (Step~2.1), it is easy to see that our choice of $\alpha$ (Step~2.2) is upper bounded by $3+O(1/\log n)$, or equivalently, $3+o(1)$. Moreover, $N^{(\alpha)}\leq n^{2\alpha}$ (by Lemma~\ref{lem:AlphaMinCutsNumber}). Thus, in Step~2.3, it suffices to enumerate $N^{(\alpha)}=O(n^6)$ $\alpha$-min cutsets for the choice of $\alpha$ in the algorithm. (For computing $\tilde{P}_f^{(\alpha)}$, as was shown in~\cite{K:01}, it suffices to enumerate $\alpha$-min cutsets for some $\alpha\leq 2$, and there are $O(n^4)$ such cutsets.) Step~2.3 can be run, e.g., using the RGC algorithm~\cite{KS:96}, in $O(n^{2\alpha}\log^2 n)=O(n^6\log^2 n)$ time (by Lemma~\ref{lem:AlphaMinCutsetsEnum}). In Steps~2.4 and~2.5, the estimates $\tilde{P}^{(\alpha)}$ and $\tilde{P}^{(\alpha)}_f$ can be computed by running the (self-adjusting) KLM estimator in $O(N^{(\alpha)} m^3) = O(n^{12})$ time. From these arguments, we conclude that the proposed algorithm runs in poly$(n)$ time.

\section{Simulation Results}\label{sec:SR}
In this section, we compare the accuracy and the running time of the Monte Carlo simulation and the proposed algorithm for approximating $F_f$ for a $3\times 3$ grid network (Fig.~\ref{fig:System}) and for the Internet2 network~\cite{I2net}, particularly, the layer 3 (Fig.~\ref{fig:System2}) and the layer 2 (Fig.~\ref{fig:System3}). We refer to these systems as \emph{S1}, \emph{S2}, and \emph{S3}, respectively. S1, S2, and S3 contain $n=9$, $20$, and $35$ nodes and $m=12$, $31$, and $46$ components, respectively. 

\begin{figure}
\centering
\includegraphics[width=0.49\textwidth]{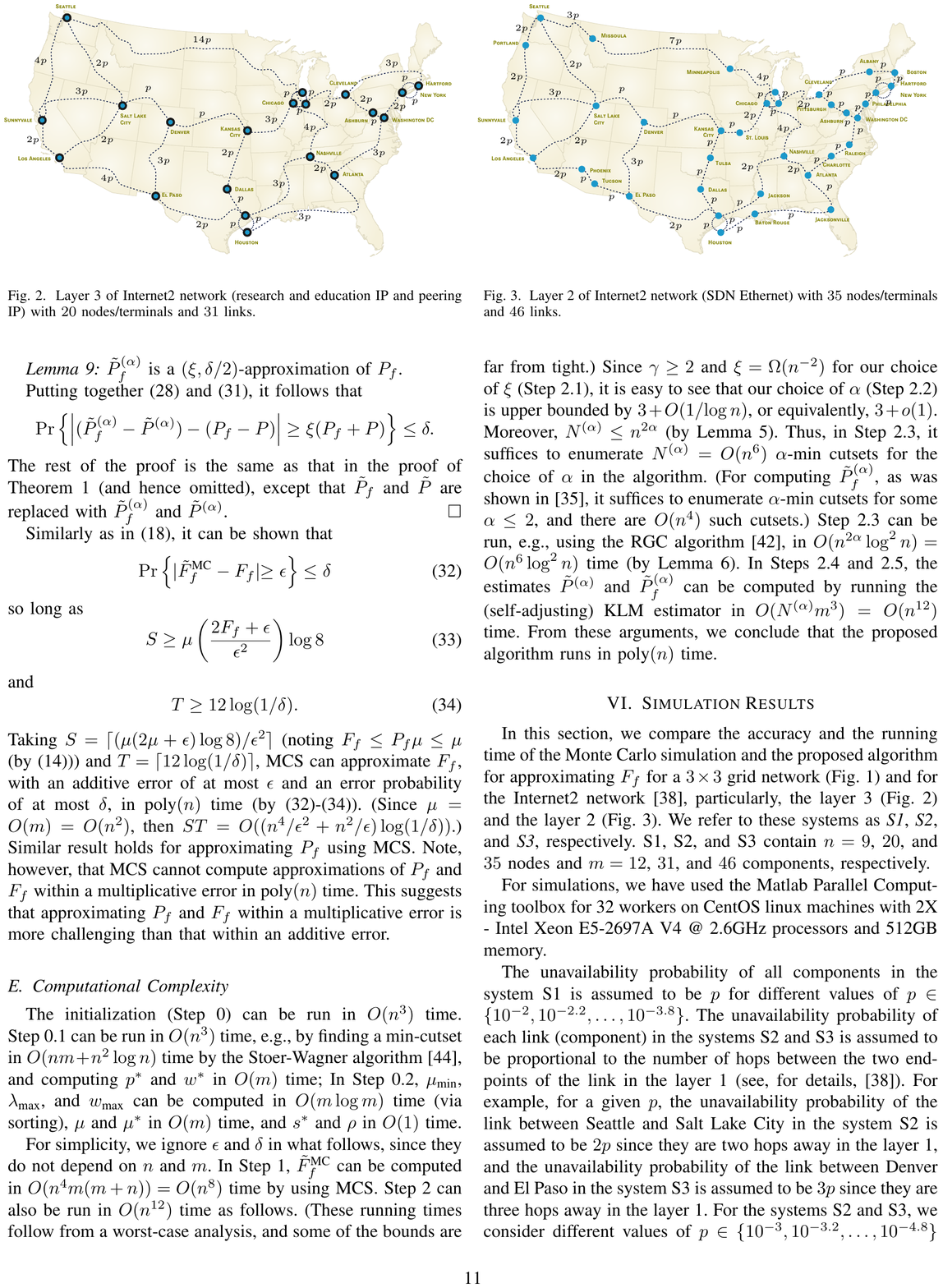}\vspace{0.25cm}
\caption{Layer 2 of Internet2 network (SDN Ethernet) with $35$ nodes/terminals and $46$ links.}\label{fig:System3}
\end{figure}

For simulations, we have used the Matlab Parallel Computing toolbox for 32 workers on CentOS linux machines with 2X - Intel Xeon E5-2697A V4 @ 2.6GHz processors and 512GB memory. 

\begin{table*}[t]
\centering
\caption{Parameters of the Systems S1, S2, and S3}
\label{tab:star}
\begin{tabular}{|c|c|c|c!{\vrule width 1pt}c|c|c|c!{\vrule width 1pt}c|c|c|c|}
\hline
\multicolumn{4}{|c!{\vrule width 1pt}}{S1} & \multicolumn{4}{c!{\vrule width 1pt}}{S2} & \multicolumn{4}{c|}{S3}\\ \hline
$p$ & $p^{*}$ & $\alpha$ & $N^{(\alpha)}$ & $p$ & $p^{*}$ & $\alpha$ & $N^{(\alpha)}$ & $p$ & $p^{*}$ & $\alpha$ & $N^{(\alpha)}$ \\ \noalign{\hrule height 1pt}
\num{1e-2} & \num{1.00e-4} & \num{2.93} & \num{53} & \num{1e-3} & \num{6.00e-6} & \num{2.71} & \num{167} & \num{1e-4} & \num{2.80e-7} & \num{2.39} & \num{329} \\ \hline
\num{1e-2.2} & \num{3.98e-5} & \num{2.60} & \num{53} & \num{1e-3.2} & \num{2.39e-6} & \num{2.50} & \num{116} & \num{1e-4.2} & \num{1.11e-7} & \num{2.26} & \num{204} \\ \hline
\num{1e-2.4} & \num{1.58e-5} & \num{2.40} & \num{37} & \num{1e-3.4} & \num{9.51e-7} & \num{2.35} & \num{79} & \num{1e-4.4} & \num{4.44e-8} & \num{2.16} & \num{146} \\ \hline
\num{1e-2.6} & \num{6.31e-6} & \num{2.22} & \num{37} & \num{1e-3.6} & \num{3.79e-7} & \num{2.20} & \num{76} & \num{1e-4.6} & \num{1.77e-8} & \num{2.06} & \num{139} \\ \hline
\num{1e-2.8} & \num{2.51e-6} & \num{2.09} & \num{37} & \num{1e-3.8} & \num{1.51e-7} & \num{2.09} & \num{61} & \num{1e-4.8} & \num{7.03e-9} & \num{1.97} & \num{139} \\ \hline
\num{1e-3} & \num{1.00e-6} & \num{2.00} & \num{20} & \num{1e-4} & \num{6.00e-8} & \num{2.01} & \num{48} & \num{1e-5} & \num{2.80e-9} & \num{1.90} & \num{139}\\ \hline
\num{1e-3.2} & \num{3.98e-7} & \num{1.92} & \num{20} & \num{1e-4.2} & \num{2.39e-8} & \num{1.94} & \num{36} & \num{1e-5.2} & \num{1.11e-9} & \num{1.84} & \num{139} \\ \hline
\num{1e-3.4} & \num{1.58e-7} & \num{1.84} & \num{20} & \num{1e-4.4} & \num{9.51e-9} & \num{1.87} & \num{31} & \num{1e-5.4} & \num{4.44e-10} & \num{1.78} & \num{139} \\ \hline
\num{1e-3.6} & \num{6.31e-8} & \num{1.78} & \num{20} & \num{1e-4.6} & \num{3.79e-9} & \num{1.81} & \num{30} & \num{1e-5.6} & \num{1.77e-10} & \num{1.73} & \num{139}\\ \hline
\num{1e-3.8} & \num{2.51e-8} & \num{1.72} & \num{20} & \num{1e-4.8} & \num{1.51e-9} & \num{1.76} & \num{30} & \num{1e-5.8} & \num{7.03e-11} & \num{1.70} & \num{100} \\ \hline 
\end{tabular}
\end{table*}

\begin{table*}[t]
\centering
\caption{Approximations of Failure Frequency for the System S1 Using the Proposed Algorithm and the MCS Algorithm}
\label{tab:results1}
\begin{tabular}{|c!{\vrule width 1pt}c|c!{\vrule width 1pt}c|c!{\vrule width 1pt}c|c!{\vrule width 1pt}c|c!{\vrule width 1pt}c|c|}
\hline
& \multicolumn{2}{c!{\vrule width 1pt}}{Bounds} & \multicolumn{2}{c!{\vrule width 1pt}}{Approximation} &  \multicolumn{2}{c!{\vrule width 1pt}}{Running Time (sec)} & \multicolumn{2}{c!{\vrule width 1pt}}{Theoretical Error} & \multicolumn{2}{c|}{Actual Error} \\ \hline
$p$ & $F^{-}_f$ & $F^{+}_f$ & Proposed & MCS & Proposed & MCS & Proposed & MCS & Proposed & MCS \\ \noalign{\hrule height 1pt}
\num{1e-2}  & \num{8.46433e-4} & \num{8.48688e-4} & \num{8.47117e-4} & \num{8.42029e-4} & \num{220} & \num{369} & \num{0.36} & \num{0.30} & \num{1.86e-3} & \num{7.87e-3} \\ \hline
\num{1e-2.2}  & \num{3.30300e-4} & \num{3.30651e-4} & \num{3.30411e-4} & \num{3.28466e-4} & \num{220} & \num{370} & \num{0.36} & \num{0.48} & \num{7.27e-4} & \num{6.62e-3} \\ \hline
\num{1e-2.4}  & \num{1.29782e-4} & \num{1.29837e-4} & \num{1.29800e-4} & \num{1.30892e-4} & \num{219} & \num{366} & \num{0.29} & \num{0.81} & \num{2.85e-4} & \num{8.55e-3} \\ \hline
\num{1e-2.6}  & \num{5.12314e-5} & \num{5.12401e-5} & \num{5.12338e-5} & \num{5.04104e-5} & \num{217} & \num{354} & \num{0.29} & \num{1.41} & \num{1.23e-4} & \num{1.62e-2} \\ \hline
\num{1e-2.8}  & \num{2.02852e-5} & \num{2.02866e-5} & \num{2.02855e-5} & \num{2.08369e-5} & \num{220} & \num{349} & \num{0.29} & \num{2.59} & \num{5.42e-5} & \num{2.72e-2} \\ \hline
\num{1e-3}  & \num{8.04785e-6} & \num{8.04807e-6} & \num{8.04784e-6} & \num{8.13261e-6} & \num{181} & \num{289} & \num{0.23} & \num{5.95} & \num{2.81e-5} & \num{1.05e-2} \\ \hline
\num{1e-3.2}  & \num{3.19689e-6} & \num{3.19693e-6} & \num{3.19690e-6} & \num{3.19932e-6} & \num{216} & \num{340} & \num{0.21} & \num{10.9} & \num{8.04e-6} & \num{7.59e-4} \\ \hline
\num{1e-3.4}  & \num{1.27094e-6} & \num{1.27094e-6} & \num{1.27094e-6} & \num{1.16440e-6} & \num{213} & \num{334} & \num{0.21} & \num{24.9} & \num{3.51e-6} & \num{8.38e-2} \\ \hline
\num{1e-3.6}  & \num{5.05526e-7} & \num{5.05527e-7} & \num{5.05527e-7} & \num{5.82654e-7} & \num{213} & \num{334} & \num{0.21} & \num{59.7} & \num{1.50e-6} & \num{1.53e-1} \\ \hline
\num{1e-3.8}  & \num{2.01142e-7} & \num{2.01142e-7} & \num{2.01142e-7} & \num{0} & \num{214} & \num{333} & \num{0.21} & $>100$ & \num{4.65e-7} & -- \\ \hline
\end{tabular}
\end{table*}

The unavailability probability of all components in the system S1 is assumed to be $p$ for different values of $p\in \{10^{-2},10^{-2.2},\dots,10^{-3.8}\}$. The unavailability probability of each link (component) in the systems S2 and S3 is assumed to be proportional to the number of hops between the two end-points of the link in the layer 1 (see, for details,~\cite{I2net}). For example, for a given $p$, the unavailability probability of the link between Seattle and Salt Lake City in the system S2 is assumed to be $2p$ since they are two hops away in the layer 1, and the unavailability probability of the link between Denver and El Paso in the system S3 is assumed to be $3p$ since they are three hops away in the layer 1. For the systems S2 and S3, we consider different values of $p\in \{10^{-3},10^{-3.2},\dots,10^{-4.8}\}$ and $p\in\{10^{-4},10^{-4.2},\dots,10^{-5.8}\}$, respectively. Note that we consider cases with very small values of $p\ll 1$ since the failure of the system is a rare event in such cases. 

For each component with unavailability probability $\theta p$ (for some integer $\theta$), the repair rate and the failure rate of the component are assumed to be $1$ and $\theta p/(1-\theta p)$, respectively. 



Table~\ref{tab:star} lists the maximum failure probability of a cutset ($p^{*}$), the parameter $\alpha$, and the number of $\alpha$-min cutsets being enumerated ($N^{(\alpha)}$), for each of the systems S1, S2, and S3 and each $p$. To enumerate the $\alpha$-min cutsets, for each system and each $p$, we ran the RGC algorithm $2\times 10^4$ times, and the total running time was about $30$ seconds for each case.

\begin{table*}[t]
\centering
\caption{Approximations of Failure Frequency for the System S2 Using the Proposed Algorithm and the MCS Algorithm}
\label{tab:results2}
\begin{tabular}{|c!{\vrule width 1pt}c|c!{\vrule width 1pt}c|c!{\vrule width 1pt}c|c!{\vrule width 1pt}c|c!{\vrule width 1pt}c|c|}
\hline
& \multicolumn{2}{c!{\vrule width 1pt}}{Bounds} & \multicolumn{2}{c!{\vrule width 1pt}}{Approximation} &  \multicolumn{2}{c!{\vrule width 1pt}}{Running Time (sec)} & \multicolumn{2}{c!{\vrule width 1pt}}{Theoretical Error} & \multicolumn{2}{c|}{Actual Error} \\ \hline
$p$ & $F^{-}_f$ & $F^{+}_f$ & Proposed & MCS & Proposed & MCS & Proposed & MCS & Proposed & MCS \\ \noalign{\hrule height 1pt}
\hspace{-0.25em}\num{1e-3}\hspace{-0.25em} & \num{3.32832e-5} & \num{3.32870e-5} & \num{3.32832e-05} & \num{3.33671e-05} & \num{643} & \num{652} & \num{1.85} & \num{2.12} & \num{1.15e-4} & \num{2.52e-3} \\ \hline
\hspace{-0.25em}\num{1e-3.2}\hspace{-0.25em} & \num{1.30611e-5} & \num{1.30617e-5} & \num{1.30612e-5} & \num{1.30204e-5} & \num{599} & \num{628} & \num{1.41} & \num{3.82} & \num{3.61e-5} & \num{3.16e-3} \\ \hline
\hspace{-0.25em}\num{1e-3.4}\hspace{-0.25em} & \num{5.15235e-6} & \num{5.15244e-6} & \num{5.15244e-6} & \num{4.94345e-6} & \num{520} & \num{631} & \num{1.11} & \num{7.61} & \num{1.80e-5} & \num{4.06e-3} \\ \hline
\hspace{-0.25em}\num{1e-3.6}\hspace{-0.25em} & \num{2.03932e-6} & \num{2.03933e-6} & \num{2.03933e-6} & \num{2.08211e-6} & \num{513} & \num{685} & \num{1.05} & \num{16.4} & \num{6.20e-6} & \num{2.10e-2} \\ \hline
\hspace{-0.25em}\num{1e-3.8}\hspace{-0.25em} & \num{8.08888e-7}  & \num{8.08890e-7} & \num{8.08890e-7} & \num{6.96292e-7} & \num{488} & \num{684} & \num{0.93} & \num{38.0} & \num{2.07e-6} & \num{1.39e-1} \\ \hline
\hspace{-0.25em}\num{1e-4}\hspace{-0.25em} & \num{3.21277e-7} & \num{3.21277e-7} & \num{3.21279e-07} & \num{3.48829e-7} & \num{482} & \num{655} & \num{0.81} & \num{91.6} & \num{7.54e-6} & \num{8.58e-2} \\ \hline
\hspace{-0.25em}\num{1e-4.2}\hspace{-0.25em} & \num{1.27715e-7} & \num{1.27715e-7} & \num{1.27715e-7} & \num{0} & \num{459} & \num{656} & \num{0.70} & $>10^2$ & \num{5.06e-7} & -- \\ \hline
\hspace{-0.25em}\num{1e-4.4}\hspace{-0.25em} & \num{5.07971e-8} & \num{5.07971e-8} & \num{5.07971e-8} & \num{0} & \num{461} & \num{648} & \num{0.64} & $>10^2$ & \num{4.43e-7} & -- \\ \hline
\hspace{-0.25em}\num{1e-4.6}\hspace{-0.25em} & \num{2.02109e-8} & \num{2.02109e-8} & \num{2.02109e-8} & \num{0} & \num{464} & \num{668} & \num{0.63} & $>10^3$ & \num{2.16e-7} & -- \\ \hline
\hspace{-0.25em}\num{1e-4.8}\hspace{-0.25em} & \num{8.04312e-9} & \num{8.04312e-9} & \num{8.04312e-9} & \num{0} & \num{469} & \num{656} & \num{0.63} & $>10^3$ & \num{8.60e-8} & -- \\ \hline
\end{tabular}
\end{table*}

\begin{table*}[t]
\centering
\caption{Approximations of Failure Frequency for the System S3 Using the Proposed Algorithm and the MCS Algorithm}
\label{tab:results3}
\begin{tabular}{|c!{\vrule width 1pt}c|c!{\vrule width 1pt}c|c!{\vrule width 1pt}c|c!{\vrule width 1pt}c|c!{\vrule width 1pt}c|c|}
\hline
& \multicolumn{2}{c!{\vrule width 1pt}}{Bounds} & \multicolumn{2}{c!{\vrule width 1pt}}{Approximation} &  \multicolumn{2}{c!{\vrule width 1pt}}{Running Time (sec)} & \multicolumn{2}{c!{\vrule width 1pt}}{Theoretical Error} & \multicolumn{2}{c|}{Actual Error} \\ \hline
$p$ & $F^{-}_f$ & $F^{+}_f$ & Proposed & MCS & Proposed & MCS & Proposed & MCS & Proposed & MCS \\ \noalign{\hrule height 1pt}
\hspace{-0.25em} \num{1e-4}\hspace{-0.25em} & \hspace{-0.25em}\num{2.02035e-6}\hspace{-0.25em} & \hspace{-0.25em}\num{2.02128e-6}\hspace{-0.25em} & \num{2.02061e-6} & \hspace{-0.25em}\num{1.96230e-6}\hspace{-0.25em} & \hspace{-0.25em}\num{728}\hspace{-0.25em} & \hspace{-0.25em}\num{1002}\hspace{-0.25em} & \hspace{-0.25em}\num{3.24}\hspace{-0.25em} & \hspace{-0.25em}\num{20.0}\hspace{-0.25em} & \hspace{-0.25em}\num{3.30e-4}\hspace{-0.25em} & \hspace{-0.25em}\num{2.92e-2}\hspace{-0.25em} \\ \hline
\hspace{-0.25em}\num{1e-4.2}\hspace{-0.25em} & \hspace{-0.25em}\num{8.04264e-7}\hspace{-0.25em} & \hspace{-0.25em}\num{8.04497e-7}\hspace{-0.25em} & \hspace{-0.25em}\num{8.04321e-7}\hspace{-0.25em} & \hspace{-0.25em}\num{7.01045e-7}\hspace{-0.25em} & \hspace{-0.25em}\num{589}\hspace{-0.25em} & \hspace{-0.25em}\num{935}\hspace{-0.25em} & \hspace{-0.25em}\num{2.62}\hspace{-0.25em} & \hspace{-0.25em}\num{50.4}\hspace{-0.25em} & \hspace{-0.25em}\num{2.19e-4}\hspace{-0.25em} & \hspace{-0.25em}\num{1.29e-1}\hspace{-0.25em} \\ \hline
\hspace{-0.25em}\num{1e-4.4}\hspace{-0.25em} & \hspace{-0.25em}\num{3.20170e-7}\hspace{-0.25em} & \hspace{-0.25em}\num{3.20229e-7}\hspace{-0.25em} & \hspace{-0.25em}\num{3.20203e-7}\hspace{-0.25em} & \num{4.36921e-7} & \hspace{-0.25em}\num{569}\hspace{-0.25em} & \hspace{-0.25em}\num{999}\hspace{-0.25em} & \hspace{-0.25em}\num{2.13}\hspace{-0.25em} & \hspace{-0.25em}$>10^2$\hspace{-0.25em} & \hspace{-0.25em}\num{1.02e-4}\hspace{-0.25em} & \hspace{-0.25em}\num{3.65e-1}\hspace{-0.25em} \\ \hline
\hspace{-0.25em}\num{1e-4.6}\hspace{-0.25em} & \hspace{-0.25em}\num{1.27459e-7}\hspace{-0.25em} & \hspace{-0.25em}\num{1.27474e-7}\hspace{-0.25em} & \hspace{-0.25em}\num{1.27468e-7}\hspace{-0.25em} & \hspace{-0.25em}\num{0}\hspace{-0.25em} & \hspace{-0.25em}\num{560}\hspace{-0.25em} & \hspace{-0.25em}\num{1024}\hspace{-0.25em} & \hspace{-0.25em}\num{2.08}\hspace{-0.25em} & \hspace{-0.25em}$>10^2$\hspace{-0.25em} & \hspace{-0.25em}\num{7.17e-5}\hspace{-0.25em} & -- \\ \hline
\hspace{-0.25em}\num{1e-4.8}\hspace{-0.25em} & \hspace{-0.25em}\num{5.07415e-8}\hspace{-0.25em} & \hspace{-0.25em}\num{5.07452e-8}\hspace{-0.25em} & \hspace{-0.25em}\num{5.07436e-8}\hspace{-0.25em} & \hspace{-0.25em}\num{0}\hspace{-0.25em} & \hspace{-0.25em}\num{554}\hspace{-0.25em} & \hspace{-0.25em}\num{1009}\hspace{-0.25em} & \hspace{-0.25em}\num{2.07}\hspace{-0.25em} & \hspace{-0.25em}$>10^2$\hspace{-0.25em} & \hspace{-0.25em}\num{4.14e-5}\hspace{-0.25em} & -- \\ \hline
\hspace{-0.25em}\num{1e-5}\hspace{-0.25em} & \hspace{-0.25em}\num{2.02003e-8}\hspace{-0.25em} & \hspace{-0.25em}\num{2.02013e-8}\hspace{-0.25em} & \hspace{-0.25em}\num{2.02007e-8}\hspace{-0.25em} & \hspace{-0.25em}\num{0}\hspace{-0.25em} & \hspace{-0.25em}\num{555}\hspace{-0.25em} & \hspace{-0.25em}\num{976}\hspace{-0.25em} & \hspace{-0.25em}\num{2.07}\hspace{-0.25em} & \hspace{-0.25em}$>10^3$\hspace{-0.25em} & \hspace{-0.25em}\num{2.77e-5}\hspace{-0.25em} & -- \\ \hline
\hspace{-0.25em}\num{1e-5.2}\hspace{-0.25em} & \hspace{-0.25em}\num{8.04185e-9}\hspace{-0.25em} & \hspace{-0.25em}\num{8.04209e-9}\hspace{-0.25em} & \hspace{-0.25em}\num{8.04175e-9}\hspace{-0.25em} & \hspace{-0.25em}\num{0}\hspace{-0.25em} & \hspace{-0.25em}\num{549}\hspace{-0.25em} & \hspace{-0.25em}\num{1007}\hspace{-0.25em} & \hspace{-0.25em}\num{2.07}\hspace{-0.25em} & \hspace{-0.25em}$>10^3$\hspace{-0.25em} & \hspace{-0.25em}\num{4.13e-5}\hspace{-0.25em} & -- \\ \hline
\hspace{-0.25em}\num{1e-5.4}\hspace{-0.25em} & \hspace{-0.25em}\num{3.20151e-9}\hspace{-0.25em} & \hspace{-0.25em}\num{3.20156e-9}\hspace{-0.25em} & \hspace{-0.25em}\num{3.20149e-9}\hspace{-0.25em} & \hspace{-0.25em}\num{0}\hspace{-0.25em} & \hspace{-0.25em}\num{551}\hspace{-0.25em} & \hspace{-0.25em}\num{1022}\hspace{-0.25em} & \hspace{-0.25em}\num{2.07}\hspace{-0.25em} &\hspace{-0.25em}$>10^4$\hspace{-0.25em} & \hspace{-0.25em}\num{2.20e-5}\hspace{-0.25em} & -- \\ \hline
\hspace{-0.25em}\num{1e-5.6}\hspace{-0.25em} & \hspace{-0.25em}\num{1.27454e-9}\hspace{-0.25em} & \hspace{-0.25em}\num{1.27455e-9}\hspace{-0.25em} & \hspace{-0.25em}\num{1.27453e-9}\hspace{-0.25em} & \hspace{-0.25em}\num{0}\hspace{-0.25em} & \hspace{-0.25em}\num{549}\hspace{-0.25em} & \hspace{-0.25em}\num{973}\hspace{-0.25em} & \hspace{-0.25em}\num{2.07}\hspace{-0.25em} & \hspace{-0.25em}$>10^4$\hspace{-0.25em} & \hspace{-0.25em}\num{1.93e-5}\hspace{-0.25em} & -- \\ \hline
\hspace{-0.25em}\num{1e-5.8}\hspace{-0.25em} & \hspace{-0.25em}\num{5.07402e-10}\hspace{-0.25em} & \hspace{-0.25em}\num{5.07406e-10}\hspace{-0.25em} & \hspace{-0.25em}\num{5.07404e-10}\hspace{-0.25em} & \hspace{-0.25em}\num{0}\hspace{-0.25em} & \hspace{-0.25em}\num{506}\hspace{-0.25em} & \hspace{-0.25em}\num{988}\hspace{-0.25em} & \hspace{-0.25em}\num{1.75}\hspace{-0.25em} & \hspace{-0.25em}$>10^4$\hspace{-0.25em} & \hspace{-0.25em}\num{4.20e-6}\hspace{-0.25em} & -- \\ \hline
\end{tabular}
\end{table*}

For fixed $\delta = 10^{-2}$ (i.e., an error probability at most $1\%$), Table~\ref{tab:results1} represents the approximations $\tilde{F}_f$ and $\tilde{F}^{\text{MC}}_f$ of $F_f$ for the system S1 using the proposed algorithm and the MCS algorithm, the total running time for each algorithm, and the theoretical error factor and the actual (observed) error factor of each algorithm. In particular, the theoretical error factor ($\epsilon$) is computed based on the analysis in Section~\ref{subsec:TA2}, and the actual error factor, i.e., $\max\{|\varphi-F^{-}_f|/F^{-}_f,|\varphi-F^{+}_f|/F^{-}_f\}$, where $\varphi$ is $\tilde{F}_f$ or $\tilde{F}^{\text{MC}}_f$ for the proposed or the MCS algorithm, respectively, is computed based on the first-order upper- and lower-bounds $F_f^{+}$ and $F_f^{-}$ on $F_f$ resulting from the bounding technique. The values of $F_f^{+}$ and $F_f^{-}$ are also given for reference in Table~\ref{tab:results1}. Similarly, Tables~\ref{tab:results2} and~\ref{tab:results3} correspond to the results for the systems S2 and S3, respectively.

To compute $F_f^{+}$ and $F_f^{-}$, we enumerated all cutsets in each system. The accuracy of this technique, however, cannot be fairly compared with that of the other two algorithms since the running time for enumerating all cutsets in each system (about 1 week for the system S2 and 2 weeks for the system S3) was much larger than that of the other two algorithms. 

A simple comparison of the MCS algorithm and the proposed algorithm shows that not only is the latter technique faster than the former, but also it provides more accuracy, for each system and for any given $p$. For example, for the system S1 and $p=10^{-3}$, running the proposed algorithm for $181$ seconds one can approximate $F_f$ with an error factor of $2.81\times 10^{-5}$; whereas running the MCS algorithm for $289$ seconds, one can only approximate $F_f$ with an error factor of $1.05\times 10^{-2}$. As an another example, for the system S1 and $p=10^{-3.8}$, the proposed algorithm provides an approximation with an error factor of $4.65\times 10^{-7}$ in $214$ seconds; whereas running the MCS algorithm for $333$ seconds, no failure state is detected (and hence the output is ``zero,'' and the actual error factor is meaningless). Similar comparison results hold for larger systems S2 and S3 (see Tables~\ref{tab:results2} and~\ref{tab:results3}). Note that the advantages of the proposed algorithm over the MCS algorithm are even more evident for larger systems.

These results confirm that the proposed algorithm offers a significantly better tradeoff between running time and approximation accuracy. Note also that the improvements become more profound for smaller values of $p$. This is evident from the two examples above. Furthermore, for smaller values of $p$ the running time of the proposed algorithm becomes smaller, whereas the running time of the MCS algorithm remains almost the same. This comes from the fact that as $p$ decreases the parameter $\alpha$ decreases (even for smaller error factor), and the number of $\alpha$-min cutsets decreases. 

Furthermore, the theoretical error factors for both algorithms, obtained from the worst-case analysis in Section~\ref{subsec:TA2}, are much larger than the actual error factors, computed based on the simulation results. This suggests that for many practical systems each of these algorithms may provide a better complexity-accuracy tradeoff than what was shown in the analysis. However, the ratio of the actual error factor to the theoretical error factor for the proposed algorithm, when compared to that for the MCS algorithm, is much smaller. Thus, one can expect that the proposed algorithm outperforms the MCS algorithm in practice even more than that in theory.

\section{Conclusion and Open Problems}\label{sec:COP}
In this work, we considered the problem of estimating the failure frequency of large-scale composite $k$-terminal reliability systems. It was previously shown that the failure probability can be efficiently approximated with provable guarantees. However, no such result was previously known for approximating the failure frequency. 

We proposed the first polynomial-time (in the number of cutsets in the system) algorithm for approximating the failure frequency within an arbitrary multiplicative error and with an arbitrary error probability. The main ideas of the proposed algorithm are in summary as follows: (i)~the failure frequency can be linked to a linear combination of the probabilities of two sets of events, and (ii)~each of these probabilities can be written as a Boolean formula of disjunctive normal form, and the truth probability of such a formula can be estimated using an unbiased estimator within an arbitrary error factor and with an arbitrary error probability. 

The number of cutsets of a system can generally grow exponentially with the number of nodes in the system. Motivated by this, for the special case of all-terminal reliability systems in which all nodes are terminals, we proposed the first polynomial-time (in the number of nodes in the system) algorithm using only the near-minimum cutsets, instead of all cutsets, for approximating the failure frequency with an arbitrary multiplicative error factor and with an arbitrary error probability. The main ideas here are: (i)~the number of near-minimum cutsets, as opposed to the number of all cutsets, is only polynomial in the number of nodes, and all such cutsets can be enumerated in polynomial time, and (ii)~for a proper choice of near-minimum cutsets, neglecting all those cutsets which are not near-minimum yields an arbitrarily small error while approximating the failure frequency. 

We compared the accuracy and the running time of the Monte Carlo simulation (MCS) and the proposed algorithm for all-terminal reliability systems (via theoretical analysis and simulation study). The comparison results confirm that the proposed algorithm achieves a significantly better tradeoff between accuracy and running time than MCS.

Unlike the special case of all-terminal reliability systems, the number of near-minimum cutsets in $k$-terminal reliability systems (for arbitrary $2\leq k<n$) generally grows exponentially with the number of nodes. One, and perhaps the most important, problem which remains open in this area of research is to design a polynomial-time (in the number of nodes) algorithm for approximating the failure probability and the failure frequency in $k$-terminal reliability systems. Some other directions for future research include deriving tighter bounds for the analysis and improving the running time of the proposed algorithms. 

\appendix

\subsection{Proof of Lemma~\ref{lem:KLME}}
Let $P(\boldsymbol{z})$ be the probability that a randomly chosen assignment is equal to $\boldsymbol{z}$. Given that $Z_j$ is chosen in Step~1, the probability of choosing $\boldsymbol{z}$ in Step~2 is equal to $P(\boldsymbol{z})/P_Z(j)$. By the linearity of expectation, it follows that 
\begin{equation}\label{eq:EPsZi}
\mathbb{E}(\pi_{s,t}| Z_j)=\sum_{\boldsymbol{z}: \boldsymbol{z}\models Z_j} \frac{Q_Z}{N(\boldsymbol{z})} \frac{P(\boldsymbol{z})}{P_Z(j)},
\end{equation} where the notation ``$\boldsymbol{z}\models Z_j$'' indicates that the summation is taken over all $\boldsymbol{z}$ satisfying $Z_j$. (Similarly, in the following, we use the notation ``$\boldsymbol{z}\models \Phi_M$'' to indicate all $\boldsymbol{z}$ satisfying $\Phi_M$.) This gives 
\begin{eqnarray*}
\mathbb{E}(\pi_{s,t}) &\stackrel{(a)}{=}& \sum_{j} \frac{P_Z(j)}{Q_Z}\sum_{\boldsymbol{z}: \boldsymbol{z}\models Z_j} \frac{Q_Z}{N(\boldsymbol{z})} \frac{P(\boldsymbol{z})}{P_Z(j)}\\ 
&\stackrel{(b)}{=}& \sum_{j} \sum_{\boldsymbol{z}: \boldsymbol{z}\models Z_j} \frac{P(\boldsymbol{z})}{N(\boldsymbol{z})}\\ 
&\stackrel{(c)}{=}& \sum_{\boldsymbol{z}: \boldsymbol{z}\models \Phi_M} \sum_{j: \boldsymbol{z}\models Z_j} \frac{P(\boldsymbol{z})}{N(\boldsymbol{z})} \\
&\stackrel{(d)}{=}& \sum_{\boldsymbol{z}: \boldsymbol{z}\models \Phi_M} \frac{P(\boldsymbol{z})}{N(\boldsymbol{z})} \sum_{j: \boldsymbol{z}\models Z_j} 1\\
&\stackrel{(e)}{=}& \sum_{\boldsymbol{z}: \boldsymbol{z}\models \Phi_M} P(\boldsymbol{z})\\
&\stackrel{(f)}{=}& \Pi_{M}, 	
\end{eqnarray*} where $(a)$ follows from the law of total expectation, i.e., \[\mathbb{E}(\pi_{s,t})=\sum_{j} \Pr\{\text{selecting } Z_j\} \cdot \mathbb{E}(\pi_{s,t}| Z_j),\] noting that the probability of selecting $Z_j$ is equal to $P_Z(j)/Q_Z$, and $\mathbb{E}(\pi_{s,t}|Z_j)$ is given by~\eqref{eq:EPsZi}; $(b)$ follows since $Q_Z$ and $P_Z(j)$ are independent of $\boldsymbol{z}$; $(c)$ follows since any $\boldsymbol{z}$ satisfying $Z_j$ (for any $j$) also satisfies $\Phi_M$; $(d)$ follows since $P(\boldsymbol{z})/N(\boldsymbol{z})$ is independent of $j$; $(e)$ follows since $\sum_{j: \boldsymbol{z}\models Z_j} 1 = N(\boldsymbol{z})$ (by definition), and $(f)$ follows from the definition of $\Pi_M$. Thus, \[\mathbb{E}(\pi_{s,t})=\Pi_M.\] Similarly, 
\begin{eqnarray*}
\mathbb{E}((\pi_{s,t})^2) &=& \sum_{j} \frac{P_Z(j)}{Q_Z}\sum_{\boldsymbol{z}: \boldsymbol{z}\models Z_j} \frac{(Q_Z)^2}{(N(\boldsymbol{z}))^2} \frac{P(\boldsymbol{z})}{P_Z(j)}\\
&=& \sum_{j} \sum_{\boldsymbol{z}: \boldsymbol{z}\models Z_j} \frac{Q_Z}{(N(\boldsymbol{z}))^2} P(\boldsymbol{z})\\
&=& \sum_{\boldsymbol{z}: \boldsymbol{z}\models \Phi_M} \frac{Q_Z}{N(\boldsymbol{z})} P(\boldsymbol{z})\\
&{\leq}& Q_Z \Pi_M,   	
\end{eqnarray*} noting that $N(\boldsymbol{z})\geq 1$ for all $\boldsymbol{z}$ satisfying $\Phi_M$. Thus,  
\begin{eqnarray*}
\mathrm{var}(\pi_{s,t})	 &=& \mathbb{E}((\pi_{s,t})^2)-(\mathbb{E}(\pi_{s,t}))^2\\ 
&\leq & Q_Z \Pi_M - (\Pi_M)^2\\
&{\leq}& (M-1)(\Pi_M)^2,
\end{eqnarray*} noting that 
\begin{align*}
Q_Z &=\sum_{j} P_Z(j) \\ &= \sum_{j} \Pr\{\text{a randomly chosen } \boldsymbol{z} \text{ satisfies } Z_j\}\\ &\leq  M \cdot\Pr\{\text{a randomly chosen } \boldsymbol{z} \text{ satisfies } \Phi_M\}\\ &= M\Pi_M.	
\end{align*} Since $\pi_t = (\sum_{s} \pi_{s,t})/S$ for all $t\in [T]$, it follows that ${\mathbb{E}(\pi_t) = \mathbb{E}(\pi_{s,t})}$ and $\mathrm{var}(\pi_t)=\mathrm{var}(\pi_{s,t})/S$. Thus,
\[\mathbb{E}(\pi_t) =\Pi_M \] and \[\mathrm{var}(\pi_t)\leq \frac{(M-1) (\Pi_M)^2}{S}\] for any $t\in [T]$. Applying the Chebychev's inequality, we get
\begin{align*}
\Pr & \left\{|\pi_{t} -\Pi_{M}| \geq \xi \Pi_{M}\right\} \\ &\leq  \Pr\left\{|\pi_t-\Pi_M|\geq \xi \sqrt{\frac{S}{M-1}\mathrm{var}(\pi_t) }\right\} \\ &\leq \frac{M-1}{\xi^2 S}\\ &\leq \frac{1}{4},
\end{align*} for the choice of $S$ in the algorithm (Step~5). Define an indicator variable $I_t$ for all $t\in [T]$ as follows: $I_t=1$ if $|\pi_t-\Pi_M|\geq \xi \Pi_M$, and $I_t=0$ otherwise. Let $I \triangleq \sum_{t=1}^{T} I_t$. Note that $\mathbb{E}(I)\leq T/4$ since $I_t=1$ with probability at most $1/4$, and $I_t=0$ otherwise. By applying Lemma~\ref{lem:CH}, we get \[\Pr\left\{I\geq (1+\epsilon)\mathbb{E}(I)\right\}\leq \exp\left(-\left(\frac{\epsilon^2}{2+\epsilon}\right)\mathbb{E}(I)\right)\] for any $\epsilon>0$. Taking \[\epsilon = \frac{T-2\mathbb{E}(I)}{2\mathbb{E}(I)},\] we have $(1+\epsilon)\mathbb{E}(I) = T/2$. Since $\mathbb{E}(I)\leq T/4$, then $T-2\mathbb{E}(I)\geq T/2$ and $T+2\mathbb{E}(I)\leq 3T/2$. Thus, 
\begin{align}\nonumber
\Pr & \{I\geq (1+\epsilon)\mathbb{E}(I)\} \\ \nonumber &= \Pr\left\{I\geq \frac{T}{2}\right\}\\ \nonumber &\leq \exp\left(-\left(\frac{\epsilon^2}{2+\epsilon}\right)\mathbb{E}(I)\right)\\ \nonumber &= \exp\left(-\frac{1}{2}\left(\frac{(T-2\mathbb{E}(I))^2}{T+2\mathbb{E}(I)}\right)\right)\\ \nonumber &\leq \exp\left(-\frac{T}{12}\right)\\ \label{eq:IT1} &\leq \delta,
\end{align} for the choice of $T$ in the algorithm (Step~7). Note that $\tilde{\Pi}_{M}=\mathrm{median}(\{\pi_{t}\}_{t=1}^{T})$ (by definition). If the median $\tilde{\Pi}_M$ deviates from $\Pi_M$ by more than $\xi\Pi_M$ (i.e., $|\tilde{\Pi}_{M}-\Pi_M|\geq \xi\Pi_M$), then $\pi_t$ deviates from $\Pi_M$ by more than $\xi\Pi_M$ (i.e., $|\pi_t-\Pi_M|\geq \xi\Pi_M$) for at least half of $t\in [T]$, or equivalently, $I_t=1$ for at least half of $t\in [T]$ (i.e., $I\geq T/2$). Thus, 
\begin{align}\nonumber
\Pr & \left\{|\tilde{\Pi}_{M}-\Pi_{M}|\geq \xi \Pi_{M} \right\}\\ \label{eq:IT2} &\leq \Pr\left\{I \geq \frac{T}{2}\right\}.
\end{align} By combining~\eqref{eq:IT1} and~\eqref{eq:IT2}, it is easy to see that $\tilde{\Pi}_M$ is a $(\xi,\delta)$-approximation of $\Pi_M$.




\subsection{Proof of Lemma~\ref{lem:CH}}
The following inequality is useful for the proof of the lemma. For any $x> 0$ and $0\leq \theta\leq 1$,
\begin{equation}\label{eq:BI}
x^{\theta}\leq 1+(x-1)\theta. 	
\end{equation} This inequality follows immediatley from a generalization of the Brnouolli's inequality as follows. 
\begin{lemma}[{\hspace{-0.05em}\cite[Theorem~A]{S:08}}]
For any $x>-1$ and ${0\leq\theta\leq 1}$, \[(1+x)^{\theta}\leq 1+x\theta.\]
\end{lemma}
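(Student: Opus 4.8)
The plan is to recognize the asserted inequality as the statement that the concave function $t\mapsto t^{\theta}$ (for $0\le\theta\le 1$) lies below its tangent line at $t=1$, after the substitution $t=1+x$. Since $x>-1$ guarantees $t=1+x>0$, this substitution is legitimate, and the claim $(1+x)^{\theta}\le 1+\theta x$ is equivalent to $t^{\theta}\le 1+\theta(t-1)=\theta t+(1-\theta)$.

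First I would dispose of the degenerate exponents: for $\theta=0$ both sides equal $1$, and for $\theta=1$ both sides equal $1+x$, so equality holds and there is nothing to prove. Hence I assume $0<\theta<1$ throughout the remainder.

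For the main range, the slickest route is the weighted arithmetic--geometric mean inequality. Applying it to the positive numbers $1+x$ and $1$ with weights $\theta$ and $1-\theta$ (which sum to $1$) gives
\[
(1+x)^{\theta}\cdot 1^{\,1-\theta}\ \le\ \theta(1+x)+(1-\theta)\cdot 1\ =\ 1+\theta x,
\]
which is exactly the claim. Alternatively, a self-contained single-variable argument avoids invoking AM--GM: set $g(x)\triangleq 1+\theta x-(1+x)^{\theta}$, so that $g(0)=0$ and $g'(x)=\theta\bigl(1-(1+x)^{\theta-1}\bigr)$. Because $\theta-1<0$, one checks that $(1+x)^{\theta-1}>1$ for $-1<x<0$ and $(1+x)^{\theta-1}<1$ for $x>0$; hence $g'<0$ on $(-1,0)$ and $g'>0$ on $(0,\infty)$, so $g$ attains its global minimum at $x=0$, where $g(0)=0$. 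Therefore $g(x)\ge 0$ for all $x>-1$, which is the desired inequality.

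There is no substantive obstacle here, since the result is elementary; the only points requiring care are the bookkeeping on the sign of $g'$ (equivalently, verifying concavity via $\tfrac{d^2}{dt^2}\,t^{\theta}=\theta(\theta-1)t^{\theta-2}\le 0$) and handling the endpoint cases $\theta\in\{0,1\}$ separately, since the strict-concavity argument degenerates there. I would present the AM--GM line as the main proof and relegate the monotonicity computation to a remark for readers who prefer a from-scratch derivation. Note finally that this lemma immediately yields the inequality~\eqref{eq:BI} used earlier, by the substitution replacing $1+x$ with $x$ (valid for $x>0$).
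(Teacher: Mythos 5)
Your proof is correct. Note, however, that the paper does not prove this lemma at all: it is imported verbatim from an external reference (Theorem~A of the cited work) and used as a black box inside the proof of Lemma~\ref{lem:CH}, so there is no in-paper argument to compare against. Both of your routes are valid and standard --- the weighted AM--GM applied to $1+x$ and $1$ with weights $\theta$ and $1-\theta$ gives the inequality in one line, and the first-derivative analysis of $g(x)=1+\theta x-(1+x)^{\theta}$ is a sound self-contained alternative; your handling of the endpoint cases $\theta\in\{0,1\}$ and the observation that the substitution $1+x\mapsto x$ recovers~\eqref{eq:BI} are both accurate.
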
 

Taking $x = e^{z}$ for any $z$, we get $e^{z I_t}\leq 1+(e^{z}-1)I_t$ for all $t$ (by~\eqref{eq:BI}). Taking expectation from both sides of this inequality, it follows that $\mathbb{E}(e^{z I_t})\leq \mathbb{E}(1+(e^{z}-1)I_t) = 1+(e^z-1)\mathbb{E}(I_t)$. Note that $\mathbb{E}(I_t) = \mathbb{E}(I)/T$ since $I = \sum_{t=1}^{T} I_t$, and $I_1,\dots,I_T$ are identically distributed. Thus, one can see that
\begin{equation}\label{eq:Ettheta1}
\mathbb{E}(e^{z I_t})\leq 1+(e^{z}-1)\frac{\mathbb{E}(I)}{T}.
\end{equation} Similarly, it can be seen that
\begin{equation}\label{eq:Ettheta2}
\mathbb{E}(e^{-z I_t})\leq 1+(e^{-z}-1)\frac{\mathbb{E}(I)}{T}.
\end{equation} Obviously, 
\begin{equation}\label{eq:Ptheta1}
\Pr\{I\geq (1+\epsilon)\mathbb{E}(I)\} = \Pr\{e^{z I}\geq e^{z(1+\epsilon)\mathbb{E}(I)}\},
\end{equation} for any $z>0$. By the Markov's inequality, $\Pr\{e^{z I}\geq e^{z(1+\epsilon)\mathbb{E}(I)}\}\leq \mathbb{E}(e^{z I})/e^{z(1+\epsilon)\mathbb{E}(I)}$. Since $I_1,\dots,I_T$ are independent, $\mathbb{E}(e^{z I}) = \mathbb{E}(e^{z I_1})\cdots \mathbb{E}(e^{z I_T})$. By~\eqref{eq:Ettheta1}, $\mathbb{E}(e^{z I})\leq (1+(e^z-1)\mathbb{E}(I)/T)^T$. Since ${(1+x/T)^T\leq e^{x}}$ for any $x$, then $\mathbb{E}(e^{z I})\leq e^{(e^{z}-1)\mathbb{E}(I)}$. Thus, 
\begin{align}\label{eq:Prtheta1} \nonumber
\Pr\{e^{z I}\geq e^{z(1+\epsilon)\mathbb{E}(I)}\} &\leq \frac{e^{(e^{z}-1)\mathbb{E}(I)}}{e^{z(1+\epsilon)\mathbb{E}(I)}}\\ & = e^{(e^{z}-1-z(1+\epsilon))\mathbb{E}(I)}.
\end{align} Let $f(z)\triangleq e^{z}-1-z(1+\epsilon)$. Taking $z = \log(1+\epsilon)$, we get $f(z)=(1+\epsilon)(1-\log(1+\epsilon))-1$. It is easy to see that $\log(1+\epsilon)\geq 2\epsilon/(2+\epsilon)$ for any $\epsilon\geq 0$. (Letting $g(\epsilon)\triangleq \log(1+\epsilon)-2\epsilon/(2+\epsilon)$, and noting $g(0)=0$ and $g(\epsilon)$ is increasing, it follows that $g(\epsilon)\geq 0$ for any $\epsilon\geq 0$.) Thus, 
\begin{equation}\label{eq:fbound1}
f(z)\leq -\epsilon^2/(2+\epsilon)	
\end{equation}
for any $\epsilon\geq 0$. By~\eqref{eq:Ptheta1}-\eqref{eq:fbound1}, we get 
\[\Pr\{I\geq (1+\epsilon)\mathbb{E}(I)\}\leq e^{-\left(\frac{\epsilon^2}{2+\epsilon}\right)\mathbb{E}(I)}.\] Similarly as in~\eqref{eq:Ptheta1}, it is obvious that
\begin{equation}\label{eq:Ptheta2}
\Pr\{I\leq (1-\epsilon)\mathbb{E}(I)\} = \Pr\{e^{-z I}\geq e^{-z(1-\epsilon)\mathbb{E}(I)}\},	
\end{equation} for any $z>0$. Similar to~\eqref{eq:Prtheta1}, except by using~\eqref{eq:Ettheta2} instead of~\eqref{eq:Ettheta1}, it follows that 
 \begin{equation}\label{eq:Prtheta2}
\Pr\{e^{-z I}\geq e^{-z(1-\epsilon)\mathbb{E}(I)}\} \leq e^{(e^{-z}-1+z(1-\epsilon))\mathbb{E}(I)}.
\end{equation} Let $f(z)\triangleq e^{-z}-1+z(1-\epsilon)$. First, suppose that $\epsilon\geq 1$. Taking $z=\epsilon$, we get $f(z)=e^{-\epsilon}-1+\epsilon(1-\epsilon)$. Since $e^{-\epsilon}\leq 1/(1+\epsilon)$ and $\epsilon/(1+\epsilon)\geq 1/2$, then $f(z)\leq -\epsilon^2(\epsilon/(1+\epsilon))\leq -\epsilon^2/2$ for any $\epsilon\geq 1$. Next, suppose that $0\leq \epsilon<1$. Taking $z = -\log(1-\epsilon)$, we get $f(z)=-(1-\epsilon)\log(1-\epsilon)-\epsilon$. By the Taylor's expansion, $\log(1-\epsilon)=-\epsilon-\epsilon^2/2-\epsilon^3/3-\epsilon^4/4-\dots$, and consequently, $(1-\epsilon)\log(1-\epsilon)=-\epsilon+\epsilon^2/2+\epsilon^3/6+\epsilon^4/12+\dots\geq -\epsilon+\epsilon^2/2$. Then, $f(z)\leq \epsilon-\epsilon^2/2-\epsilon= -\epsilon^2/2$ for any $0\leq \epsilon<1$. By these arguments, \begin{equation}\label{eq:fbound2}
f(z)\leq -\epsilon^2/2
\end{equation}
for any $\epsilon\geq 0$. By~\eqref{eq:Ptheta2}-\eqref{eq:fbound2}, we get 
\[\Pr\{I\leq (1-\epsilon)\mathbb{E}(I)\}\leq e^{-\left(\frac{\epsilon^2}{2}\right)\mathbb{E}(I)}.\]

\subsection{Proof of Lemma~\ref{lem:AlphaMinCutsNumber}}
Fix an arbitrary $\alpha\geq 1$. Consider an application of the (non-recursive) generalized contraction algorithm. Recall that in each round of this algorithm, one component, say $i$, is chosen at random with probability of choosing component $i$ equal to $w_i/w$, and the two end-nodes of the component $i$ are merged. The algorithm continues this process until more than $\lceil 2\alpha\rceil $ nodes remain, and terminates otherwise. Once terminated, the algorithm returns a randomly selected cutset in the resulting (multi-) system. 

Assume, without loss of generality, that the weights of all components are the same and equal to $w_0$. Otherwise, we can replace each component $i$ of weight $w_i$ by $w_i/w_0$ parallel components, each of weight $w_0$, for sufficiently small $w_0$ such that $w_i/w_0$ is an integer for all $i$ (for more details, see \cite{KS:96}). Note that the minimum weight of a cutset in this new system is equal to that in the original system, whereas the minimum size of a cutset in this new system can be different from that in the original system. We denote by $w^{*}$ and $s^{*}$ ($=w^{*}/w_0$), with a slight abuse of notation, the minimum weight and the minimum size of a cutset in the new system, respectively. 


Let $\mathcal{C}$ be an arbitrary cutset of weight $\beta w^{*}$ for some $1\leq \beta\leq \alpha$. Note that $\mathcal{C}$ has $\beta s^{*}$ components. We say that \emph{$\mathcal{C}$ is hit in round $i$} if one of its components is chosen and collapsed in round $i$. Consider the round $n-r+1$ where $r$ nodes are remaining (for arbitrary $\lceil 2\alpha\rceil< r\leq n$). Since the number of components connected to each node is lower bounded by $s^{*}$ (otherwise there exists a cutset of size less than $s^{*}$), then the total number of components is lower bounded by $r s^{*}/2$ (otherwise there exists a node with less than $s^{*}$ components connected to it). Thus the probability that $\mathcal{C}$ is hit in round $n-r+1$ is upper bounded by $\beta s^{*}/(r s^{*}/2)=2\beta/r\leq 2\alpha/r$. Similarly, the probability that $\mathcal{C}$ survives (i.e., $\mathcal{C}$ is not hit in) rounds $0,1,\dots,i$, and $\mathcal{C}$ is hit in round $i+1$ is upper bounded by $\beta s^{*}/((n-i) s^{*}/2)=2\beta/(n-i)$ for any $0\leq i\leq n-\lceil 2\alpha\rceil-1$. Thus, the probability that $\mathcal{C}$ survives all rounds until more than $\lceil 2\alpha\rceil$ nodes remain is lower bounded by 
\[
\prod_{i=0}^{n-\lceil2\alpha\rceil -1} \left(1-\frac{2\alpha}{n-i}\right) < \left(\frac{2}{n}\right)^{2\alpha}.
\] (See, for more details, the proof of~\cite[Theorem~2.6]{K:01}.) 

Once the algorithm terminates, there exist $\lceil 2\alpha\rceil$ nodes in the eystem, and consequently, the number of cutsets is upper bounded by the number of bipartitions of the nodes, i.e., $2^{\lceil 2\alpha\rceil-1}-1<2^{2\alpha}$. Thus, the probability that $\mathcal{C}$ is chosen via the random selection is lower bounded by $1/2^{2\alpha}$. 

The probability that $\mathcal{C}$ is output by the algorithm is the product of two probabilities: (i) the probability that $\mathcal{C}$ survives until $\lceil 2\alpha\rceil$ nodes remain (this probability is lower bounded by $2^{2\alpha}/n^{2\alpha}$), and (ii) the probability that $\mathcal{C}$ is chosen by the random selection (this probability is lower bounded by $1/2^{2\alpha}$). Thus the probability that $\mathcal{C}$ is output by the algorithm is lower bounded by $(2^{2\alpha}/n^{2\alpha}) \times (1/2^{2\alpha})= 1/n^{2\alpha}$. 

Each run of the algorithm outputs an $\alpha$-min cutset with probability at least $1/n^{2\alpha}$. Thus, it follows that the number of $\alpha$-min cutsets is at most $n^{2\alpha}$. (This holds because the algorithm returns each distinct $\alpha$-min cutset with probability lower bounded by $1/n^{2\alpha}$, and such events are disjoint.) 

Thus $O(n^2)$ runs of the contraction algorithm are sufficient to find a min-cutset with high probability. By a clever recursive implementation of the contraction algorithm, as shown in \cite[Lemma 4.1]{KS:96} and \cite[Lemma 4.3]{KS:96}, a min-cutset can be found in $O(n^{2} \log^2 n)$ time (instead of $O(n^4)$ time for the obvious implementation) with high probability.

\subsection{Proof of Lemma~\ref{lem:AlphaMinCutsetsEnum}}
The proof follows from the \emph{coupon-collector} argument~\cite{ER:61}: if there are $M$ bins, and potentially an infinite number of balls to be thrown independently and uniformly one at a time, then throwing $M\log (M/\delta)$ balls suffices with probability at least $1-\delta$ to have each bin contain at least one ball. (To be specific, throwing $M\log M - M\log\log (1/(1-\delta))$ balls suffices with probability $1-\delta$ as $M$ grows large~\cite{ER:61}.) By the result of Lemma~\ref{lem:AlphaMinCutsNumber}, the number of $\alpha$-min cutsets is at most $n^{2\alpha}$. Think of $\alpha$-min cutsets as bins and the runs of the (non-recursive) generalized contraction algorithm as balls. By the coupon-collector argument, one can enumerate all $\alpha$-min cutsets with probability at least $1-n^{-c}$, for any $c>0$, in $O(n^{2\alpha+2}\log(n^{2\alpha+c}))$ time  (by running the generalized contraction algorithm $n^{2\alpha}\log (n^{2\alpha+c})$ times). By running RGC algorithm $n^{2\alpha}\log (n^{2\alpha+c})$ times, as was shown in \cite[Lemma 4.1]{KS:96} and \cite[Lemma 4.3]{KS:96}, all $\alpha$-min cutsets can be found in $O(n^{2\alpha}(\log (n^{2\alpha+c}))(\log n)) = O(n^{2\alpha} \log^2 n)$ time with probability at least $1-n^{-c}$ (see, e.g.,~\cite[Theorem 8.5]{KS:96}). 


\bibliographystyle{IEEEtran}
\bibliography{IEEEabrv,FailureRefs}

\end{document}